\title{A Greedy Approach for Budgeted Maximum Inner Product Search}
\date{}
\author{
  Hsiang-Fu Yu\\
  {The University of Texas at Austin} \\
  {rofuyu@cs.utexas.edu}
  \and
  Cho-Jui Hsieh\\
  {The University of California, Davis}\\
  {chohsieh@cs.ucdavis.edu}
  \and
  Qi Lei\\
  {The University of Texas at Austin} \\
  {leiqi@ices.utexas.edu}
  \and
  Inderjit S. Dhillon\\
  {The University of Texas at Austin}\\
  {inderjit@cs.utexas.edu}
}
\def\rofuN{7}
\def\rofuK{3}
\def\Zxshift{-1.75}
\def\Zyshift{-\rofuK}
\def\colorPallete{{"1,0,0", "0,1,0", "0,0,1"}}
\newcommand{\good}[1]{\textcolor{blue}{#1}}
\newcommand{\drawWidx}[3]{
  \foreach \x in {1,...,\rofuK} {    
    \pgfmathparse{\colorPallete[\x-1]};
    \definecolor{curcolor}{rgb}{\pgfmathresult};
    \ifthenelse{\x=1}{\pgfmathparse{int(#1)}}{};
    \ifthenelse{\x=2}{\pgfmathparse{int(#2)}}{};
    \ifthenelse{\x=3}{\pgfmathparse{int(#3)}}{};
    \edef\y{\pgfmathresult}
    \node [draw, regular polygon, regular polygon sides=3, shape border rotate=180, inner sep=0, 
    fill=curcolor!40!white, above= 0cm of rofu\x\y, minimum width=0.5em] () {\scalebox{0.5}{$w_\x$}};
  }
}
\newcommand{\drawHeap}[6] {
    \pgfmathparse{\colorPallete[#1-1]};
    \definecolor{curcolor1}{rgb}{\pgfmathresult};
    \pgfmathparse{\colorPallete[#3-1]};
    \definecolor{curcolor2}{rgb}{\pgfmathresult};
    \pgfmathparse{\colorPallete[#5-1]};
    \definecolor{curcolor3}{rgb}{\pgfmathresult};

    \Tree [.\node[draw,fill=curcolor1!30!white,rectangle,minimum width=2em]{\scriptsize $#2$}; 
      [.\node[draw,fill=curcolor2!30!white,rectangle,minimum width=2em]{\scriptsize$#4$}; ] 
      [.\node[draw,fill=curcolor3!30!white,rectangle,minimum width=2em]{\scriptsize$#6$}; ] 
  ];
}
\newcommand{\brank}[1][]{\def\tst{#1}\ifx\tst\empty\operatorname{\bf rank}\else\operatorname{\bf rank}\left(#1\right)\fi}
\def\gmips{{\sf Greedy-MIPS}\xspace}
\def\pmips{{\sf PCA-MIPS}\xspace}
\def\smips{{\sf Sample-MSIPS}\xspace}
\def\dmips{{\sf Diamond-MSIPS}\xspace}
\def\nmips{{\sf Naive-MIPS}\xspace}
\def\lmips{{\sf LSH-MIPS}\xspace}
\newlength{\leftshift}
\def\yahoo{{\sf yahoo-music}\xspace}
\def\music{{\sf yahoo-music}\xspace}
\def\netflix{{\sf netflix}\xspace}
\begin{document}
\maketitle

\begin{abstract}
  Maximum Inner Product Search (MIPS) is an important task in many machine 
  learning applications such as the prediction phase of a low-rank matrix 
  factorization model for a recommender system. There have been some works on 
  how to perform MIPS in sub-linear time recently. However, most of them do
  not have the flexibility to control the trade-off between search 
  efficient and search quality. In this paper, we study the MIPS problem with a 
  computational budget. By carefully studying the problem structure of MIPS, we 
  develop a novel Greedy-MIPS algorithm, which can handle budgeted MIPS by 
  design. While simple and intuitive, Greedy-MIPS yields surprisingly superior 
  performance compared to state-of-the-art approaches. As a specific 
  example, on a candidate set containing half a million vectors of dimension 200, 
  Greedy-MIPS runs 200x faster than the naive approach while yielding search 
  results with the top-5 precision greater than 75\%.
\end{abstract}

\section{Introduction}
\label{gmips:sec:intro}

In this paper, we study the computational issue in the 
prediction phase for many matrix factorization based latent embedding models 
in recommender systems, which can be mathematically formulated as a Maximum
Inner Product Search (MIPS) problem. 
Specifically, given a large collection of $n$ candidate vectors  
\[
\cH = \cbr{\bh_j\in\RR^k: 1 ,\ldots,n}
\]
and a query vector $\bw\in\RR^k$, 
MIPS aims to identify a subset of candidates that have top largest inner product 
values with $\bw$. We also denote $H=[\bh_1,\ldots,\bh_j,\ldots,\bh_n]^\top$ as 
the candidate matrix. A naive linear search procedure to solve MIPS for a 
given query $\bw$ requires $O(nk)$ operations to compute $n$ inner products 
and $O(n \log n)$ operations to obtain the sorted ordering of the $n$ 
candidates.\footnote{When only the  
largest $B$ elements are required, the sorting procedure can be reduced to 
$O(n + B \log B)$ on average using a selection algorithm~\cite{I2A}.} 

Recently, MIPS has drawn a lot of attention in the machine learning community.
Matrix factorization (MF) based recommender system~\cite{YK09a,GD12a} is 
one of the most important applications.
In an MF based recommender system, each user $i$ is associated with a vector 
$\bw_i$ of dimension $k$, while each item $j$ is associated with a vector 
$\bh_j$ of dimension $k$. The interaction (such as preference)
between a user and an item is modeled by the value of the inner product 
between $\bw_i$ and $\bh_j$. It is clear that identifying top-ranked items 
in such a system for a user is exactly a MIPS problem.  
Because both the number of users (the number of queries) and the number of 
items (size of vector pool in MIPS) can easily grow to  
millions, a naive linear search is extremely expensive; for example, to 
compute the  preference for all $m$ users over $n$ items with latent embeddings of 
dimension $k$ in a recommender system requires at least $O(mnk)$ operations.
When both $m$ and $n$ are large, the prediction procedure is extremely time 
consuming; it is even {\em slower} than the training procedure used to obtain 
the $m+n$ embeddings, which costs only $O(\abs{\Omega}k)$ operations per iteration. 
Taking the \music dataset as an example, we have 
$m=1M$, $n=0.6M$, $\abs{\Omega} = 250M$, and
\[
  mn = 600B \gg 250M = \abs{\Omega}.
\]
As a result, the development of efficient algorithms for MIPS is needed in 
large-scale recommender systems.  In addition, MIPS can be found in many other 
machine learning applications, such as the prediction  for a multi-class or 
multi-label classifier~\cite{JW10a,HFY14b}, an object detector, a structure 
SVM predicator, and many others.   

There is a recent line of research on accelerating MIPS for large $n$, 
such as~\cite{PR12a,NK13b,YB14a,AS14a,BN15a,GB15a}.  
However, most of them do not have the flexibility to control the trade-off 
between search efficiency and  search quality in the prediction phase.
In this paper, we consider the budgeted MIPS problem, which is a generalized version of 
the standard MIPS with a computation budget: {\em how to generate a set of top-ranked 
candidates under a given {\bf budget} on the number of inner products one can 
perform.} By carefully 
studying the problem structure of MIPS, we develop a novel \gmips
algorithm, which handles budgeted MIPS by design. While simple and 
intuitive, Greedy-MIPS yields surprisingly superior performance compared to 
existing state-of-the-art approaches.

{\bf Contributions.} Our contributions can be summarized as follows:
    \begin{itemize}
      \item We carefully study the MIPS problem and develop 
        \gmips, which is a novel algorithm without any 
        nearest neighbor search reduction that is essential in many 
        state-of-the-art approaches~\cite{YB14a,AS14a,BN15a}. 
      \item \gmips is orders of magnitudes  
        faster than many state-of-the-art MIPS approaches to obtain a desired 
        search performance. As a specific example, on the \yahoo data sets with 
        $n=624,961$ and $k=200$, \gmips runs 200x faster 
        than the naive approach and yields search results with the top-5 
        precision more than $75\%$, while the search performance of other state-of-the-art
        approaches under the similar speedup drops to less than $3\%$ precision. 
      \item \gmips supports MIPS with a budget, which brings the ability to 
        control of the trade-off between computation efficiency and 
        search quality in the prediction phase. To the best of our knowledge, 
        among existing MIPS approaches, only the sampling approaches proposed 
        in~\cite{EC97a,GB15a} support the similar flexibility under a limited situation where all 
        the candidates and query vectors are non-negative.
%
    \end{itemize}

{\bf Organization.} We first review existing fast MIPS approaches in 
Section~\ref{gmips:sec:related} and introduce the budgeted MIPS problem in 
Section~\ref{gmips:sec:bmips}.  In Section~\ref{gmips:sec:greedy-mips}, 
we propose a novel greedy budgeted MIPS approach called \gmips. We then show the empirical comparison 
in Section~\ref{gmips:sec:exp} and conclude this paper in Section~\ref{gmips:sec:conclusions}.   

\section{Existing Approaches for Fast MIPS}
\label{gmips:sec:related}

Because of its wide applicability, several algorithms have been proposed to design 
efficient algorithms for MIPS. Most of existing approaches consider to reduce 
the MIPS problem to the nearest neighbor search problem (NNS), where the goal 
is to identify the nearest candidates of the given query, and apply an 
existing efficient NNS algorithm to solve the reduced problem~\cite{YB14a,AS14a,BN15a,AS15a,AA16a}.
\cite{YB14a} is the first MIPS work which adopts such a MIPS-to-NNS reduction.
Variants MIPS-to-NNS reduction are also proposed in \cite{AS14a,AS15a}. 
Experimental results in \cite{YB14a} show the superiority of the NNS reduction 
over the traditional branch-and-bound search approaches for MIPS~\cite{PR12a,NK13b}.

Fast MIPS approaches with sampling schemes have become popular 
recently~\cite{EC97a,GB15a}. Various sampling schemes have been proposed to  
handle MIPS problem with different {\em  constraints}. We will briefly review 
two popular sampling schemes in Section~\ref{gmips:sec:sample-mips}. 

    \tikzexternalenable
    \begin{figure}[t]
      \tdplotsetmaincoords{60}{95}
      \centering
        \begin{subfigure}[b]{0.49\linewidth}
          \begin{resize}{1\linewidth}
            \tikzsetnextfilename{mips-to-nn-1}
            \begin{tikzpicture}[scale=5,tdplot_main_coords]
              \coordinate (O) at (0,0,0);
              \coordinate(W) at (0.4,0.5,0);
              \coordinate(P) at (0.3,0.953939,0);
              \coordinate(Q) at (0.45,0.35,0);
              \coordinate(R) at (0,1,0);

              \coordinate(W2) at (0.4,0.5,1);
              \coordinate(P2) at (0.3,0.953939,0);
              \coordinate(Q2) at (0.45,0.35,0.82158383625);
              \coordinate(R2) at (0,1,0);

              \begin{scope}[fill opacity=0.2]
                \pgfsetlinewidth{.1pt}
                \tdplotsphericalsurfaceplot[]{40}{40}{1}{white}{white}{}{}{}%
              \end{scope}

              \tdplotdrawarc[tdplot_main_coords,very thick,color=black]{(0,0,0)}{1}{-180}{180}{anchor=north east,color=black}{
              $x^2+y^2=M$}


              \draw[very thick,->,black] (0,0,0) -- (1.5,0,0) node[anchor=north east]{$x$};
              \draw[very thick,->,black] (0,0,0) -- (0,1.25,0) node[anchor=north west]{$y$};

              \draw node[circle,fill] at (O) (){};
              \draw node[circle,fill] at (0,1,0) (){};

              \draw node[circle,fill,color=red!90!white] at (W) () {};
              \draw node[circle,fill,color=red!30!white] at (P) () {};
              \draw node[circle,fill,color=red!30!white] at (Q) () {};
              \draw node[circle,fill,color=red!30!white] at (R) () {};


              \draw[->,very thick,color=black] (O) -- (W) node[anchor=north west] {$\bw$};
              \draw[->,very thick,color=black] (O) -- (P) node[anchor=north west] {$\bh_1$};
              \draw[->,very thick,color=black] (O) -- (Q) node[anchor=north] {$\bh_2$};
              \draw[->,very thick,color=black] (O) -- (R) node[anchor=south west] {$\bh_3$};
            \end{tikzpicture}
          \end{resize}
          \caption{Original MIPS in $\RR^2$.}
        \label{gmips:fig:nn-reduction-1}
      \end{subfigure}
      \begin{subfigure}[b]{0.49\linewidth}
          \begin{resize}{1\linewidth}
            \tikzsetnextfilename{mips-to-nn-2}
            \begin{tikzpicture}[scale=5,tdplot_main_coords]
              \coordinate (O) at (0,0,0);
              \coordinate(W) at (0.4,0.5,0);
              \coordinate(P) at (0.3,0.953939,0);
              \coordinate(Q) at (0.45,0.35,0);
              \coordinate(R) at (0,1,0);

              \coordinate(W2) at (0.4,0.5,0);
              \coordinate(P2) at (0.3,0.953939,0);
              \coordinate(Q2) at (0.45,0.35,0.82158383625);
              \coordinate(R2) at (0,1,0);

              \begin{scope}[fill opacity=0.2]
                \pgfsetlinewidth{.1pt}
                \tdplotsphericalsurfaceplot[]{40}{40}{1}{blue!10!white}{blue!5!white}{}{}{}%
              \end{scope}

              \tdplotdrawarc[tdplot_main_coords,dotted,very thick,color=black,fill,fill opacity=0.2]{(0,0,0)}{1}{0}{360}{}{ } 
              \tdplotdrawarc[tdplot_main_coords,very thick,color=black]{(0,0,0)}{1}{-90}{90}{anchor=north east,color=black}{
              $x^2+y^2=M,\ z=0$}

              \tdplotdrawarc[tdplot_main_coords,dotted,very thick,color=blue]{(0,0,0.821583836250)}{0.57009}{0}{360}{}{ } 
              \tdplotdrawarc[tdplot_main_coords,very 
              thick,color=blue]{(0,0,0.821583836250)}{0.57009}{-120}{135}{}{}
              \draw node[color=blue,anchor=west] at (0,0.55,0.821583836250) () { 
                $\begin{cases}
                  &x^2+y^2+z^2=M,\\
                  &z= \sqrt{M - \norm{\bh_2}^2}
              \end{cases}$};

              \draw[very thick,->,black] (0,0,0) -- (1.5,0,0) node[anchor=north east]{$x$};
              \draw[very thick,->,black] (0,0,0) -- (0,1.25,0) node[anchor=north west]{$y$};
              \draw[very thick,->,black] (0,0,0) -- (0,0,1.4) node[anchor=south]{$z$};

              \draw node[circle,fill] at (O) (){};

              \draw node[circle,fill,color=red!90!white] at (W2) () {};
              \draw node[circle,fill,color=blue!60!white] at (P2) () {};
              \draw node[circle,fill,color=blue!60!white] at (Q2) () {};
              \draw node[circle,fill,color=blue!60!white] at (R2) () {};


              \draw[->,very thick,color=blue] (O) -- (W2) node[anchor=north west] {$\bwhat$};
              \draw[->,very thick,color=blue] (O) -- (P2) node[anchor=north west] {$\bhhat_1$};
              \draw[->,very thick,color=blue] (O) -- (Q2) node[anchor=south] {$\bhhat_2$};
              \draw[->,very thick,color=blue] (O) -- (R2) node[anchor=south west] {$\bhhat_3$};
            \end{tikzpicture}
          \end{resize}
          \caption{Reduced NNS in $\RR^3$.}
        \label{gmips:fig:nn-reduction-2}
      \end{subfigure}
      \caption[MIPS-to-NN Reduction.]{MIPS-to-NN reduction. In 
        \ref{gmips:fig:nn-reduction-1}, all the candidate vectors $\cbr{\bh_j}$ and 
        the query vector $\bw$ are in $\RR^{2}$. $\bh_2$ is the nearest 
        neighbor of $\bw$, while $\bh_1$ is the vector yielding the maximum 
        value of the inner product with $\bw$. In \ref{gmips:fig:nn-reduction-2}, 
        the reduction proposed in \cite{YB14a} is applied to $\bw$ and 
        $\cbr{\bh_j}$: $\bwhat = [\bw; 0]^\top$ and 
        $\bhhat_j = [\bh_j; \sqrt{M - \norm{\bh_j}^2}]^\top,\ \forall j$, 
        where $M = \max_j\ \norm{\bh_j}^2$. All the transformed vectors are in
        the 3-dimensional sphere with radius $\sqrt{M}$. 
        As a result, the nearest neighbor 
        of $\bwhat$ in this transformed 3-dimensional NNS problem, $\bhhat_1$, 
        corresponds to the vector $\bh_1$ which  yields the maximum inner 
        product value with $\bw$ in the original 2-dimensional MIPS problem. }
      \label{gmips:fig:nn-reduction}
    \end{figure}
    \tikzexternaldisable

\subsection{Approaches with Nearest Neighbor Search Reduction} 
\label{gmips:sec:nn-mips}
We briefly introduce the concept of the 
reduction proposed in \cite{YB14a}. First, we consider the relationship 
between the Euclidean distance and the inner product:
\begin{align*}
  \norm{\bw - \bh_{j_1}}^2 &= \norm{\bw}^2 + \norm{\bh_{j_1}}^2 - 2 \bw^\top \bh_{j_1}\\
  \norm{\bw - \bh_{j_2}}^2 &= \norm{\bw}^2 + \norm{\bh_{j_2}}^2 - 2 \bw^\top \bh_{j_2}.
\end{align*}
When all the candidate vectors $\bh_j$ share the same length; that is,  
\[
  \norm{\bh_{1}} = \norm{\bh_{2}} = \cdots = \norm{\bh_{n}},
\]
the MIPS problem is exactly the same as the NNS problem because 
\begin{equation}
  \norm{\bw - \bh_{j_1}} > \norm{\bw-\bh_{j_2}} \iff \bw^\top\bh_{j_1} < 
  \bw^\top\bh_{j_2}  
  \label{gmips:eq:mips:nn-eq}
\end{equation}
when $\norm{\bh_{j_1}} = \norm{\bh_{j_2}}$. 
However,  when $\norm{\bh_{j_1}}\neq \norm{\bh_{j_2}}$, \eqref{gmips:eq:mips:nn-eq} 
no longer holds. See Figure~\ref{gmips:fig:nn-reduction-1} for an example where 
not all the candidate vectors have the same length. We can see that $\bh_1$ 
is the candidate vector yielding the maximum inner product with $\bw$, while
$\bh_2$ is the nearest neighbor candidate. 

To handle the situation where candidates have different lengths, \cite{YB14a} 
proposes the following transform to reduce the original MIPS problem  with $\cH$ 
and $\bw$ in a $k$ dimensional space to a new NNS problem with $\cHhat=\cbr{\bhhat_1,\ldots,\bhhat_n}$ and 
$\bwhat$ in a $k+1$ dimensional space:
\begin{align}
\bwhat &= \sbr{\bw; 0}^\top, \notag \\
\bhhat_j &= \sbr{\bh_j; \sqrt{M - \norm{\bh_j}^2}}^\top,\ \forall j=1,\ldots, 
n, \label{gmips:eq:mips:nn-reduction}
\end{align}
where $M$ is the maximum squared length over the entire candidate set $\cH$:
\[
  M = \max_{j=1,\ldots,n}\ \norm{\bh_j}^2.
\]
First, we can see that with the above transform, $\norm{\bhhat_j}^2 = M$ for all $j$: 
\[
  \norm{\bhhat_j}^2 = \norm{\bh_j}^2 + M - \norm{\bh_j}^2 = M,\ \forall j.
\]
Then, for any $j_1\neq j_2$, we have 
\begin{align*}
  &\norm{\bwhat - \bhhat_{j_1}}  < \norm{\bwhat - \bhhat_{j_2}} \\
  \iff& M + \norm{\bw}^2 - 2\bw^\top\bh_{j_1} < M + \norm{\bw}^2 - 
  2 \bw^\top\bh_{j_2} \\
  \iff & \bw^\top\bh_{j_1} > \bw^\top\bh_{j_2}.
\end{align*}
With the above relationship, the original $k$-dimensional MIPS problem is equivalent to the 
transformed $k+1$ dimensional NNS problem. In Figure~\ref{gmips:fig:nn-reduction-2}, 
we show the transformed NNS problem for the original MIPS problem presented
in Figure~\ref{gmips:fig:nn-reduction-1}. 

In \cite{AS15a}, another MIPS-to-NNS reduction has been proposed. The high 
level idea is to apply a transformation to $\cH$ such that all the candidate 
vectors {\em roughly} have the same length by appending additional $\kbar$ 
dimensions. In the procedure by \cite{AS15a}, all the $\bh_j$ vectors are 
assumed (or scaled) to have $\norm{\bh_j} \le U,\ \forall j$, where $U<1$ is a positive 
constant. Then the following transform is applied to reduce the original $k$-dimensional 
MIPS problem 
to a new NNS problem with $(k+\kbar)$-dimensional vectors $\cHhat$ and $\bwhat$ 
defined as: 
\begin{align}
  \bwhat  &= \sbr{\bw; \bzero_{\kbar}}^\top \notag\\
  \bhhat_j &=  \sbr{\bh_j; 1/2-\norm{\bh_j}^{2^1}; 1/2 - 
  \norm{\bh_j}^{2^2};\ldots;1/2-\norm{\bh_j}^{2^{\kbar}}}^\top \label{gmips:eq:mips:nn-reduction-2},
\end{align}
where $\bzero_{\kbar}$ is a zero vector of dimension $\kbar$.
Because $U < 1$, \cite{AS15a} shows that with the transform 
\eqref{gmips:eq:mips:nn-reduction-2}, we have 
$\norm{\bhhat_j}^2 = \kbar/4 + \norm{\bh_j}^{2^{\kbar+1}}$, with the second 
term  vanishing as $\kbar\rightarrow \infty$. Thus, all the candidates $\bhhat_j$ 
approximately have the same length. We can see the idea behind 
\eqref{gmips:eq:mips:nn-reduction-2} is similar to \eqref{gmips:eq:mips:nn-reduction}: 
transforming $\cH$ to $\cHhat$ such that all the candidates have the same length.  
Note that \eqref{gmips:eq:mips:nn-reduction} achieves this goal exactly while 
\eqref{gmips:eq:mips:nn-reduction-2} achieves this goal approximately.  Both 
transforms show a similar empirical performance in \cite{BN15a}.

There are many choices to solve the transformed NNS problem after the 
MIPS-to-NN reduction has been applied. In \cite{AS14a,BN15a,AS15a}, various 
locality sensitive hashing schemes have been considered. In \cite{YB14a}, a 
PCA-tree based approach is proposed, and shows better performance than 
LSH-based approaches, which is consistent to the empirical observations in 
\cite{AA16a} and our experimental results shown in Section~\ref{gmips:sec:exp}.  
In \cite{AA16a}, a simple K-means clustering algorithm is proposed to handled 
the transformed NNS problem.

\subsection{Sampling-based Approaches} 
\label{gmips:sec:sample-mips}

The idea of the sampling-based MIPS approach is first proposed in~\cite{EC97a} as an 
approach to perform approximate matrix-matrix multiplications. 
Its applicability on MIPS problems is studied very recently~\cite{GB15a}. The idea behind
a sampling-based approach called \smips, is about to design an efficient 
sampling procedure such that the $j$-th candidate is selected with probability $p(j)$:
\[
  p(j) \sim \bh_j^\top \bw.
\]
In particular, \smips is an efficient scheme to sample
$(j,t) \in [n]\times [k]$ with the probability $p(j,t)$: 
\[
  p(j,t) \sim h_{jt} w_t.
\]
Each time a pair $(j,t)$ is sampled, we increase the count for the $j$-th item 
by one. By the end of the sampling process, the spectrum of the counts forms an 
estimation of $n$ inner product values. 
Due to the nature of the sampling approach, it can only handle the 
situation where all the candidate vectors and query vectors are {\em nonnegative}. 

\dmips, a diamond sampling scheme proposed in \cite{GB15a}, is an extension of \smips 
to handle the maximum {\bf squared} inner product search problem (MSIPS) where the 
goal is to identify candidate vectors with largest values of $\rbr{\bh_j^\top\bw}^2$. 
If both $\bw$ and $\cH$ are nonnegative or $\bh_j^\top\bw \ge 0,\ \forall j$, 
MSIPS can be used to generate the  solutions for MIPS. However, the solutions 
to MSIPS can be very different from the solutions to MIPS in general. For 
example, if all the inner product values are negative, the ordering for MSIPS 
is the exactly reverse ordering induced by MIPS. Here we can see that the 
applicability of both \smips and \dmips to MIPS is very limited. 

\section{Budgeted Maximum Inner Product Search}
\label{gmips:sec:bmips}

The core idea behind the fast approximate MIPS approaches is to trade the search 
quality for the shorter query latency: the shorter the search latency, the 
lower the search quality. In most existing fast MIPS 
approaches, the trade-off depends on the approach-specific parameters such as 
the depth of the PCA tree in \cite{YB14a} or the number of hash functions in 
\cite{AS14a,BN15a,AS15a}. Such approach-specific parameters are usually 
required to construct approach-specific data structures before any query is 
given, which means that the trade-off is somewhat {\em fixed} for all the 
queries. Particularly, the computation cost for all the query requests is fixed. 
However, in many real-world scenarios, each query request might have a 
different computational budget, which raises the question: 
{\em Can we design a fast MIPS approach which supports the dynamic adjustment 
of the trade-off in the query phase?} 

In this section, we formally define the budgeted MIPS problem which is an 
extension of the standard MIPS problem with a computational budget as a 
parameter given in the query phase. We first summarize the essential 
components for fast MIPS approaches in Section~\ref{gmips:sec:components} and  
give the problem definition of budgeted MIPS in 
Section~\ref{gmips:sec:bmips-def}.


\subsection{Essential Components for Fast MIPS Approaches}
\label{gmips:sec:components}
Before diving into the details of budgeted MIPS, we first review the 
essential components in fast MIPS algorithms:

\begin{itemize}
  \item {Before any query request:}
    \begin{itemize}
      \item {\em Query-Independent Data Structure Construction:} A 
        pre-processing procedure is performed on the entire candidate sets to 
        construct an approach-specific data structure $\cD$ to store information 
        about $\cH$, such as the LSH hash tables~\cite{AS14a,BN15a,AS15a}, space 
        partition trees (e.g., KD-tree or PCA-tree~\cite{YB14a}), or cluster 
        centroids~\cite{AA16a}.  
    \end{itemize}
  \item {For each query request:}
    \begin{itemize}
      \item {\em Query-dependent {\bf P}re-processing:} In some approaches, a query dependent 
        pre-processing is needed. For example, a vector augmentation is required in  
        all approaches with the MIPS-to-NNS reduction 
        \cite{YB14a,AS14a,BN15a,AA16a}. In addition, \cite{YB14a} also 
        requires another normalization. $T_P$ is used to denote the time 
        complexity of this stage. 
      \item {\em Candidate {\bf S}creening:} In this stage, based on the 
        pre-constructed data structure $\cD$, an efficient procedure is 
        performed to filter candidates such that only a subset of candidates 
        $\cC(\bw)\subset \cH$ is selected.           
        In a naive linear approach, no screening procedure is performed, so 
        $\cC(\bw)$ simply contains all the $n$ candidates. For a tree-based 
        structure, $\cC(\bw)$ contains all the candidates stored in  
        the leaf node of the query vector. In a sampling-based MIPS approach, 
        an efficient sampling scheme is designed to generate highly possible 
        candidates to form $\cC(\bw)$. $T_S$ denotes the computational cost of 
        the screening stage.  

      \item {\em Candidate {\bf R}anking:} An exact ranking is performed on the 
        selected candidates in $\cC(\bw)$ obtained from the screening stage. 
        This involves the computation of $\abs{\cC(\bw)}$ inner products and 
        the sorting procedure among these $\abs{\cC(\bw)}$ values. The overall time 
        complexity $T_R$ is 
        \[
        T_R = O(\abs{\cC(\bw)}k + \abs{\cC(\bw)}\log\abs{\cC(\bw)}). 
        \]
    \end{itemize}
\end{itemize}

The per-query computational cost $T_Q$ is 
\[
  T_Q = T_P + T_S + T_R.
\]
It is clear that the candidate screening stage is the {\em key} component for 
a fast MIPS approach. In terms of the search quality, the performance highly 
depends on whether the screening procedure can identify highly possible 
candidates. In terms of the query latency, the efficiency highly depends on 
the size of $\cC(\bw)$ and how fast to generate $\cC(\bw)$. The major 
difference between various fast MIPS approaches is the choice of the data 
structure $\cD$ and the corresponding screening procedure.  

\subsection{Budgeted MIPS: Problem Definition}
\label{gmips:sec:bmips-def}

Budgeted maximum inner product search is an extension of  the standard 
approximate MIPS problem with a computation budget: how to generate 
top-ranked candidates under a given {\bf budget} on the number of inner 
product operations one can perform.  
Budgeted MIPS has a wide applicability. For example, a real-time 
recommender system must provide a list of recommended items for its users in a 
very short response time. 

Note that the cost for the candidate ranking ($T_R$) is inevitable in the per-query 
cost: $T_Q= T_P + T_S + T_R$. A viable approach to support budgeted MIPS must 
include a screening procedure which satisfies the following requirements:
\begin{itemize}
\item  the flexibility to 
  control the size of $\cC(\bw)$ in the candidate screening stage such that 
  $\abs{\cC(\bw)}\le B$, where $B$ is a given budget, and 
\item an efficient screening procedure to obtain $\cC(\bw)$ in $O(Bk)$ 
  time such that the overall per-query cost is 
\[
  T_Q = O(Bk + B\log B).
\]
\end{itemize}

As mentioned earlier, most recently proposed efficient algorithms such as 
\pmips~\cite{YB14a} and \lmips~\cite{AS14a,BN15a,AS15a} adopt the approach to 
reduce the MIPS problem to an instance of NNS problem, and apply various search 
space partition data structures  
or techniques (e.g., LSH, KD-tree, or PCA-tree) designed for NNS to index the 
candidates $\cH$ in the {\em query-independent pre-processing} stage.
As the construction of $\cD$ is query independent, both the {\bf search performance} 
and the {\bf computation cost} are {\em fixed} when the construction is 
done. For example, the performance of a \pmips depends on the depth 
of the PCA-tree. Given a query vector $\bw$, there is no control to the size 
of $\cC(\bw)$ in the candidate generating phase. LSH-based approaches also have
the similar issue.  As a result, it is not clear how to generalize \pmips and 
\lmips in a principled way to handle the situation with a computational 
budget: how to reduce the size of $\cC(\bw)$ under a limited budget and how to 
improve the performance when a larger budget is given.  

Unlike other NNS-based algorithms, the design of \smips naturally enables it to 
support budgeted MIPS for a nonnegative candidate matrix $H$ and a 
nonnegative query $\bw$. Recall that the core idea behind \smips is to draw a 
sample candidate $j$ among $n$ candidates such that
\[
  p(j) \propto \bh_j^\top \bw.
\]
The more the number of samples, the lower the variance 
of the estimated frequency spectrum. Clearly, \smips has the flexibility to 
control the size of $\cC(\bw)$, and thus is a viable
approach for 
the budgeted MIPS problem. However, \smips works only on the situation where 
the entire $\cH$ and $\bw$ are non-negative. \dmips has the similar 
issue.

\section{\gmips: A Novel Approach for Budgeted MIPS}
\label{gmips:sec:greedy-mips}

In this section, we carefully study the problem structure of MIPS and develop 
a simple but novel algorithm called \gmips, which handles budgeted MIPS by design.
Unlike the most recent approaches~\cite{YB14a,AS14a,BN15a,AS15a,AA16a}, \gmips is an 
approach without any reduction to a NNS problem. Moreover, \gmips is a 
viable approach for the budgeted MIPS problem without the non-negativity 
limitation inherited in the sampling approaches.  

As mentioned earlier that the key component for a fast MIPS approach is the 
algorithm used in the candidate screening phase. In budgeted MIPS, 
for any given budget $B$ and query $\bw$, an {\em  ideal procedure} for the 
candidate screening phase costs $O(Bk)$ time to generate $\cC(\bw)$ 
which contains the $B$ items with the largest $B$ inner product values over 
the $n$ candidates in $\cH$. The requirement on the time complexity 
$O(Bk)$ implies that the procedure is independent from $n=\abs{\cH}$, the number of 
candidates in $\cH$. One might wonder whether such an ideal procedure exists 
or not. In fact, designing such an ideal procedure with the requirement to 
{\em generate the largest $B$ items} in $O(Bk)$ time is even more challenging 
than the original budgeted MIPS problem.

\subsection{A Motivating Example for \gmips}
\label{gmips:sec:motivating-example}
Although the existence of an {\em ideal procedure} for a general 
budgeted MIPS problem seems to be impossible, we demonstrate that an {\em 
ideal approach} exists for budgeted MIPS when $k=1$.  
It is not hard to observe that Property~\ref{gmips:prop:two-outcome}
holds for any given $\cH=\cbr{h_1,\ldots,h_n\mid h_j \in \RR}$: 
\begin{property}
For any nonzero query $w \in \RR$ and any 
budget $B > 0$, there are only two possible results for that top $B$ inner 
products between $w$ and $\cH$:
\begin{align*}
  w > 0 &\Rightarrow \text{ Largest  $B$ elements in $\cH$}, \\
  w < 0 &\Rightarrow \text{ Smallest $B$ elements in $\cH$}.
\end{align*}
  \label{gmips:prop:two-outcome}
\end{property}
This property leads to the following simple approach, which is an ideal
procedure for the budgeted MIPS problem when $k=1$:
\begin{itemize}
  \item {\em Query-independent data structure:} a {\bf sorted} list of indices of $\cH$:
    $\mathtt{s[r]},\ \mathtt{r}=1,\ldots,n$ such that $\mathtt{s[r]}$ stores the index 
    to the $r$-th largest candidate. That is 
    \[
      h_{\mathtt{s[1]}} \ge h_{\mathtt{s[2]}} \ge \cdots \ge h_{\mathtt{s[n]}},
    \]
  \item {\em Candidate screening phase:} for any given $w\neq 0$ and $B>0$,
    \[
      \text{\bf return } 
      \begin{cases}
        \text{first $B$ elements: } \cbr{\mathtt{s[1]},\ldots,\mathtt{s[B]}} & \text{ if } w > 0,\\
        \text{last $B$ elements: }\cbr{\mathtt{s[n]},\ldots,\mathtt{s[n-B+1]}} & \text{ if } w < 0
      \end{cases}
    \]
    as the indices of the exact largest-$B$ candidates. 
\end{itemize}
Note that for this simple scenario ($k=1$), neither the {\em query dependent 
pre-processing} nor the {\em  candidate ranking} is needed. Thus, the overall 
time complexity per query is $T_Q = O(B)$. We can see that
Property~\ref{gmips:prop:two-outcome} is the key to the correctness of the 
above procedure. Nevertheless, it is not clear how to generalize 
Property~\ref{gmips:prop:two-outcome} for MIPS problems with  
$k\ge 2$. Fortunately, we can 
directly utilize the fact that Property~\ref{gmips:prop:two-outcome} holds for $k=1$ to 
design an efficient greedy procedure for the candidate screening 
when $k\ge 2$. 

\subsection{A Greedy Procedure to Candidate Screening}
\label{gmips:sec:gmips-core}
To better describe the idea of the proposed algorithm \gmips, we consider the 
following definition~\eqref{gmips:eq:def-rank}: 
\begin{definition}
  \label{gmips:def:rank}
  The $\brank$ of an item $x$ among a set of items $\cX=\cbr{x_1,\ldots,x_{\abs{\cX}}}$ is defined as 
  \begin{align}
\label{gmips:eq:def-rank}
\brank(x\mid \cX) := \sum_{j=1}^{\abs{\cX}} \Ind{x_j \ge x},
  \end{align}
  where $\Ind{\cdot}$ is the indicator function. A ranking induced by $\cX$ is 
  a function $\pi(\cdot): \cX \rightarrow \cbr{1,\ldots,\abs{\cX}}$  such that 
    $\pi(x_j) =  \brank(x_j\mid\cX)\quad \forall x_j \in \cX.$
\end{definition}
One way to store a ranking $\pi(\cdot)$ induced by $\cX$ is by a sorted index array 
$\mathtt{s[r]}$ of size $\abs{\cX}$ such that 
\[
  \pi(x_{\mathtt{s[1]}}) \le \pi(x_{\mathtt{s[2]}})\le \cdots  \le 
  \pi(x_{\mathtt{s[\abs{\cX}]}}).
\]
We can see that $\mathtt{s[r]}$ stores the index to the item $x$ with $\pi(x) = r$. 

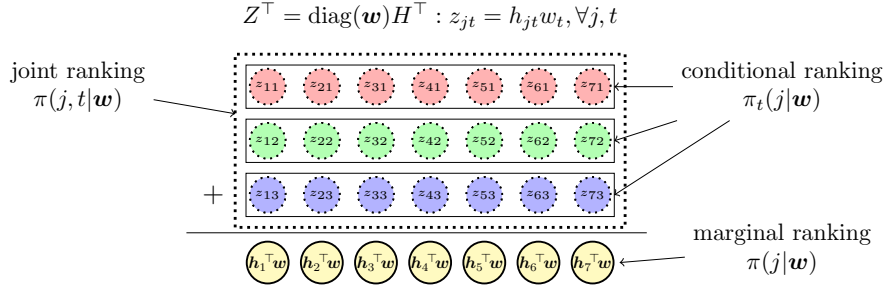
\begin{figure}[t]
  \centering
  \begin{rescale}[0.9]{0.9}
      \begin{tikzpicture}[scale=0.8]
        {
          \draw [dotted,very thick ] (0+\Zxshift-0.1,0+\Zyshift-1.1) rectangle 
          (\rofuN+\Zxshift+0.1,\rofuK+\Zyshift-0.9) {}; 
          \node[align=center] at (0+\Zxshift-3, 0.5*\rofuK + \Zyshift) (JOINT) {
          joint ranking\\ $\pi(j,t|\bw)$};
          \draw [->](JOINT) -- (0+\Zxshift-0.1, 0.5*\rofuK + \Zyshift-0.5);
          \node at (0.5*\rofuN+\Zxshift,-0.2){ $Z^\top = \diag(\bw) H^\top: z_{jt}=h_{jt}w_t, \forall j,t$};
          \foreach \x in {1,...,\rofuK} {    
            \pgfmathparse{\colorPallete[\x-1]};
            \definecolor{curcolor}{rgb}{\pgfmathresult};
            \foreach \y in {1,...,\rofuN} {
              \node[circle,inner sep=0,fill=none,draw,thick, dotted,minimum 
              width=1.5em] at (\y-0.5+\Zxshift,\rofuK-\x-0.5 +\Zyshift) {\tiny {$z_{\y\x}$}};
            }
          }
          \draw (\Zxshift-1, \Zyshift-1.2) -- (\Zxshift +\rofuN+0.5, \Zyshift-1.2);
          \node at (\Zxshift-0.5, -0.5 + \Zyshift) {\large{$+$}};
        }
        {
          \foreach \x in {1,...,\rofuK} {    
            \pgfmathparse{\colorPallete[\x-1]};
            \definecolor{curcolor}{rgb}{\pgfmathresult};
            \draw[draw,fill=none] (0+\Zxshift+0.1,\rofuK-\x+\Zyshift-1+0.1) rectangle 
            (\rofuN+\Zxshift-0.1,\rofuK-\x+\Zyshift-0.1) ; 
            \foreach \y in {1,...,\rofuN} {
              \node[circle,inner sep=0,fill=curcolor!30!white,draw,thick, dotted,minimum 
              width=1.5em] at (\y-0.5+\Zxshift,\rofuK-\x-0.5 +\Zyshift) {\tiny {$z_{\y\x}$}};
            }
          }
          \node[align=center] at (\rofuN+\Zxshift+3, 0.5*\rofuK + \Zyshift) (COND) 
          { conditional ranking\\  $\pi_t(j|\bw)$};
          \draw[->] (COND) -- (\rofuN+\Zxshift-0.1, 0.5*\rofuK + \Zyshift-0);
          \draw[->] (COND) -- (\rofuN+\Zxshift-0.1, 0.5*\rofuK + \Zyshift-1);
          \draw[->] (COND) -- (\rofuN+\Zxshift-0.1, 0.5*\rofuK + \Zyshift-2);
        }
        \foreach \y in {1,...,\rofuN} {
          \node[circle,draw,inner sep=0,fill=yellow!30!white,thick,minimum 
          width=1.5em] at (\y-0.5+\Zxshift,-1.75 +\Zyshift) {\scalebox{0.7}{$\bh_{\y}\!^\top\!\bw$}};
        }
        \node[align=center] at (\rofuN+\Zxshift+3,-\rofuK+\Zyshift+1.5) 
        (MARGINAL){ marginal ranking \\ $\pi(j|\bw)$} ;
        \draw[->] (MARGINAL) -- (\rofuN+\Zxshift, -1.75+\Zyshift) ;
      \end{tikzpicture}
    \end{rescale}
    \caption[Three rankings considered in \gmips]{$nk$ multiplications in a naive linear MIPS approach.}
  \label{gmips:fig:mips-nk-ops}
\end{figure}

In order to design an efficient candidate screening procedure, we 
carefully study the operations required for MIPS. In the naive linear MIPS 
approach, $nk$ multiplication operations are required to  obtain $n$ inner 
product values $\cbr{\bh_1^\top \bw,\ldots,\bh_n^\top\bw}$. To understand and 
analyze the computation required for MIPS, we define
an {\em implicit matrix} $Z \in \RR^{n\times k}$:
\[
Z = H\diag(\bw),  
\]
where $\diag(\bw) \in \RR^{k\times k}$ is a matrix with $\bw$ as it diagonal. 
The $(j,t)$ entry of $Z$ denotes the multiplication operation 
$z_{jt} = h_{jt}w_t$ and $\bz_j = \diag(\bw)\bh_j$ denotes the $j$-th row of 
$Z$. In Figure~\ref{gmips:fig:mips-nk-ops}, we use $Z^\top$ to demonstrate the 
implicit matrix.  The implicit matrix $Z$ is query dependant, that is, 
the values of $Z$ depend on the query vector $\bw$. 
Note that $n$ inner product values can be obtained by taking the 
column-wise summation of $Z^\top$. In particular, we have
\[
  \bh_j^\top\bw = \sum_{t=1}^k z_{jt},\ j=1,\ldots,n. 
\]
Thus, the ranking induced by the $n$ inner product values can be characterized 
by the {\em  marginal ranking} $\pi(j| \bw)$ defined on the implicit matrix $Z$ 
as follows: 
\begin{align}
  \pi(j| \bw)
  &:= \brank\rbr{\left.\sum_{t=1}^{k}z_{jt} \ \right| \
    \cbr{\sum_{t=1}^{k}z_{1t},\cdots,\sum_{t=1}^{k} z_{nt}}} \label{gmips:eq:marginal-rank}\\
  &= \brank\rbr{\bh_j^\top \bw \mid \cbr{\bh_{1}^\top 
  \bw,\ldots,\bh_n^\top\bw}}.  \nonumber
\end{align}
As mentioned earlier, it is hard to design an ideal candidate screening
procedure which generates $\cC(\bw)$ based on the marginal ranking. Because 
the main goal for the candidate screening phase is to quickly identify 
candidates which are highly possible to be top-ranked items, it suffices to 
have an efficient procedure generating $\cC(\bw)$ by an approximation 
ranking. Here we propose a greedy heuristic ranking:
\begin{align}
  \pibar(j|\bw) := \brank\rbr{\left.\max_{t=1}^{k}z_{jt} \ \right| \
    \cbr{\max_{t=1}^{k}z_{1t},\cdots,\max_{t=1}^{k} z_{nt}}},
    \label{gmips:eq:greedy-rank}
\end{align}
which is obtained by replacing the summation terms in 
\eqref{gmips:eq:marginal-rank} by ${\max}$ operators. The intuition behind 
this heuristic is that the largest element of $\bz_j$ multiplied by $k$ is an upper 
bound of $\bh_j^\top \bw$: 
\[
  \bh_j^\top \bw = \sum_{t=1}^{k}z_{jt} \le k \rbr{\max_{t=1}^k z_{jt}}. 
\]
Thus, $\pibar(j|\bw)$, which is induced by such an upper bound of 
$\bh_j^\top\bw$, could be a reasonable approximation ranking for the marginal 
ranking $\pi(j|\bw)$.   

Next we design an efficient procedure which generates $\cC(\bw)$ according to 
the ranking $\pibar(j|\bw)$ defined in \eqref{gmips:eq:greedy-rank}. 
First, based on the relative orderings of $\cbr{z_{jt}}$, we consider the 
{\em joint ranking} and the {\em conditional ranking} defined as follows: 
\begin{itemize}
  \item Joint ranking: $\pi(j,t| \bw)$ is the exact ranking over the $nk$ 
    entries of $Z$. 
    \[
      \pi(j,t| \bw):= \brank (z_{jt} \mid \cbr{z_{11},\ldots,z_{nk}}).
    \]
  \item Conditional ranking:  $\pi_t(j| \bw)$ is the exact  ranking over the $n$ 
    entires of the $t$-th row of $Z^\top$. 
    \[
      \pi_t(j| \bw) := \brank(z_{jt} \mid  \cbr{z_{1t},\ldots,z_{nt}}). 
    \]
\end{itemize}
See Figure~\ref{gmips:fig:mips-nk-ops} for an illustration for both rankings. 
Similar to the marginal ranking, both joint and conditional rankings 
are query dependent. 

Observe that, in \eqref{gmips:eq:greedy-rank}, for each $j$, only a single 
maximum entry of $Z$, $\max_{t=1}^k z_{jt}$, is considered to obtain the 
ranking $\pibar(j|\bw)$. To generate $\cC(\bw)$ based on $\pibar(j|\bw)$, we 
can iterate $(j,t)$ entries of $Z$ in a greedy sequence such that $(j_1,t_1)$ 
is visited before $(j_2,t_2)$ if $z_{j_1t_1} > z_{j_2t_2}$, which is exactly 
the sequence corresponding to the joint ranking $\pi(j,t|\bw)$. Each time an
entry $(j,t)$ is visited, we can include the index $j$ into $\cC(\bw)$ if 
$j \notin \cC(\bw)$. In Theorem~\ref{gmips:thm:joint-greedy}, we show that the sequence to 
include a newly observed $j$ into $\cC(\bw)$ is exactly the sequence induced by
the ranking $\pibar(j|\bw)$ defined in~\eqref{gmips:eq:greedy-rank}.  

\begin{theorem}
  For all $j_1$ and $j_2$ such that $\pibar(j_1|\bw) < \pibar(j_2|\bw)$,  
  $j_1$ will be included into $\cC(\bw)$  before $j_2$ if we iterate $(j,t)$ pairs 
  following the sequence induced by the joint ranking $\pi(j,t|\bw)$.  
  \label{gmips:thm:joint-greedy}
\end{theorem}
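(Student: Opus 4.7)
The plan is to reduce the claim to a direct comparison of two specific entries of $Z$: the per-row maxima $m_{j_1}$ and $m_{j_2}$. First I would unpack the definition of $\pibar(\cdot\mid\bw)$: writing $m_j := \max_{t=1}^k z_{jt}$, the assumption $\pibar(j_1|\bw) < \pibar(j_2|\bw)$ is, by \eqref{gmips:eq:greedy-rank}, exactly the statement that $m_{j_1}$ ranks ahead of $m_{j_2}$ in $\{m_1,\ldots,m_n\}$, i.e.\ $m_{j_1} > m_{j_2}$ (modulo the tie-breaking rule implicit in $\brank$).

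Next I would characterize the exact iteration step at which an index $j$ is first added to $\cC(\bw)$ by the greedy procedure. Let $t^*_j \in \argmax_{t} z_{jt}$, so that $z_{j,t^*_j} = m_j$. Since the iteration visits $(j,t)$ pairs in decreasing order of $z_{jt}$ (the sequence induced by $\pi(j,t|\bw)$), every other entry $z_{j,t}$ with $t \ne t^*_j$ satisfies $z_{j,t} \le z_{j,t^*_j}$, hence is visited weakly later than $(j,t^*_j)$. Therefore $j$ is first encountered — and thus first inserted into $\cC(\bw)$ — precisely at the step indexed by $\pi(j,t^*_j | \bw)$.

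Applying this characterization to both $j_1$ and $j_2$ finishes the argument: the insertion step of $j_1$ is $\pi(j_1, t^*_{j_1}|\bw)$ and the insertion step of $j_2$ is $\pi(j_2, t^*_{j_2}|\bw)$. Because $z_{j_1, t^*_{j_1}} = m_{j_1} > m_{j_2} = z_{j_2, t^*_{j_2}}$ and the joint ranking $\pi(\cdot,\cdot|\bw)$ is monotone decreasing in the entry value, we get $\pi(j_1,t^*_{j_1}|\bw) < \pi(j_2,t^*_{j_2}|\bw)$, so $j_1$ enters $\cC(\bw)$ strictly before $j_2$.

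The only subtle point — and really the sole obstacle — is handling ties in the entries of $Z$, since $\brank$ uses a weak inequality $\ge$ in its definition \eqref{gmips:eq:def-rank}. I would address this by fixing a deterministic tie-breaking rule (e.g.\ lexicographic on $(j,t)$) that is used consistently to define $\pi(j,t|\bw)$, $\pi_t(j|\bw)$, and $\pibar(j|\bw)$. Under such a consistent rule, the strict inequality $\pibar(j_1|\bw) < \pibar(j_2|\bw)$ still forces $m_{j_1}$ to precede $m_{j_2}$ in the ordering, and the monotonicity step above goes through unchanged; no additional case analysis is needed.
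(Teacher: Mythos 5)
Your proposal is correct and follows essentially the same route as the paper's proof: identify $t^*_{j}=\arg\max_t z_{jt}$ for each index, observe that an index is first inserted exactly when its maximal entry is visited in the joint-ranking order, and then translate $\pibar(j_1|\bw)<\pibar(j_2|\bw)$ into $z_{j_1t^*_{j_1}}>z_{j_2t^*_{j_2}}$ and hence $\pi(j_1,t^*_{j_1}|\bw)<\pi(j_2,t^*_{j_2}|\bw)$. Your extra paragraph on tie-breaking is a reasonable refinement the paper leaves implicit, but it is not needed for the core argument since the strict hypothesis already forces $m_{j_1}>m_{j_2}$ under the paper's definition of $\brank$.
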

\begin{proof}
  Let $t_1 = \arg\max_{t=1}^k z_{j_1t}$ and $t_2 = \arg\max_{t=1}^k z_{j_2t}$. 
  By the definition of $t_1$, we have  
  $\pi(j_1,t_1|\bw) < \pi(j_1,t|\bw),\ \forall t\neq t_1$. Thus, $(j_1,t_1)$ 
  will be first entry among $\cbr{(j_1,1),\ldots,(j_1,k)}$ to be visited in 
  the sequence corresponding to the joint ranking $\pi(j,t|\bw)$. Similarly, 
  $(j_2,t_2)$ will be the first visited entry among $\cbr{(j_2,1,),\ldots,(j_2,k)}$.  
  We also have 
  \begin{align*}
    \pibar(j_1|\bw) < \pibar(j_2|\bw) \Rightarrow z_{j_1t_1} > z_{j_2t_2} 
    \Rightarrow \pi(j_1,t_1|\bw) < \pi(j_2,t_2|\bw). 
  \end{align*}
  Thus, $j_1$ will be included into $\cC(\bw)$ before $j_2$. 
\end{proof}

At first glance, generating $(j,t)$ in the sequence according to the joint ranking 
$\pi(j,t|\bw)$ might require the access to all the $nk$ entries of $Z$ and cost 
$O(nk)$ time. In fact, based on Property~\ref{gmips:prop:rank-inv} of 
conditional rankings, we can design an efficient variant of the $k$-way merge 
algorithm~\cite[Chapter 5.4.1]{TAOCP3} to generate $(j,t)$ pairs in the 
desired sequence iteratively. 

\begin{property}
  \label{gmips:prop:rank-inv}
  Given a fixed candidate matrix $H$, for any possible $\bw$ with $w_t\neq 0$, 
  the conditional ranking $\pi_t(j| \bw)$ is either $\pi_{t+}(j)$ or 
  $\pi_{t-}(j)$:
  \begin{itemize}
    \item $\pi_{t+}(j) = \brank(h_{jt}\mid \cbr{h_{1t},\ldots,h_{nt}})$,
    \item $\pi_{t-}(j) = \brank(-h_{jt}\mid \cbr{-h_{1t},\ldots,-h_{nt}})$.
  \end{itemize}
  In particular, we have 
  \begin{align*}
    \pi_t(j|\bw) =
    \begin{cases}
      \pi_{t+}(j) & \text{ if } w_t > 0,\\
      \pi_{t-}(j) & \text{ if } w_t < 0.
    \end{cases}
  \end{align*}
\end{property}

Similar to Property~\ref{gmips:prop:two-outcome}, 
Property~\ref{gmips:prop:rank-inv} enables us to  
characterize a {\em query dependent} conditional ranking $\pi_t(j|\bw)$ by two {\em 
query independent} rankings $\pi_{t+}(j)$ and ${\pi_{t-}(j)}$. As a result,  
similar to the motivating example in 
Section~\ref{gmips:sec:motivating-example},for each $t$, we can construct and store a {\bf 
sorted} index array $\mathtt{s_t[r]},\ r=1,\ldots,n$ such that 
\begin{align}
\label{gmips:eq:sorted-idx}
  \pi_{t+}(\mathtt{s_t[1]}) \le \pi_{t+}(\mathtt{s_t[2]}) \le \cdots \le 
  \pi_{t+}(\mathtt{s_t[n]}),
\end{align}
or equivalently 
\begin{align}
\label{gmips:eq:sorted-idx-2}
  \pi_{t-}(\mathtt{s_t[1]}) \ge \pi_{t-}(\mathtt{s_t[2]}) \ge \cdots \ge 
  \pi_{t-}(\mathtt{s_t[n]}). 
\end{align}
Thus, in the phase of {\em query-independent data structure 
construction} of \gmips, we compute and store {\em query-independent} rankings 
$\pi_{t+}(\cdot), \ t=1,\ldots,k$ by $k$ sorted index arrays of length $n$: 
$\mathtt{s_t[r]},\ r=1,\ldots,n,\ t=1,\ldots,n$ 
such that \eqref{gmips:eq:sorted-idx} holds. The entire construction costs
$O(k n \log n)$ time and $O(kn)$ space. 

\begin{algorithm}[t]
  \caption{{\tt ConditionalIterator}: an iterator iterates 
    $j \in \cbr{1,\ldots,n}$ based on the conditional ranking $\pi_t(j|\bw)$. 
    This pseudo code assumes that the $k$ sorted index arrays 
  $\mathtt{s_t[r]},\ r=1,\ldots,n,\ t=1,\ldots,k$ are available.} 
  \label{gmips:alg:cond-iter}
      \begin{itemize}[leftmargin=0em]
        \item[] {\tt{{\bf class} ConditionalIterator:}}  
      \begin{itemize}[leftmargin=1.5em]
        \item[] {\tt {\bf def} constructor(dim\_idx, query\_val):}
          \begin{itemize}[leftmargin=1em]
            \item[] {\tt t, w, ptr $\leftarrow$ dim\_idx, query\_val, 1} 
          \end{itemize}
        \item[] {\tt {\bf def} current():}
              {\tt {\bf return}} $\begin{cases}
                \mathtt{s_t[ptr]}& \text{if } \mathtt{w > 0}, \\
                \mathtt{s_t[n-ptr+1]}& \text{otherwise. } 
              \end{cases}$
        \item[] {\tt {\bf def} hasNext():} 
              {\tt{\bf return} $(\mathtt{ptr} < n)$}
        \item[] {\tt {\bf def} getNext():}
             $\mathtt{ptr\leftarrow ptr+1}$ and 
             {\tt {\bf return} current()}
      \end{itemize}
    \end{itemize}
\end{algorithm}

Next we describe the details of the proposed \gmips algorithm when a query 
$\bw$ and the budget $B$ are given. As mentioned earlier, \gmips utilizes the idea of the $k$-way 
merge algorithm to visit $(j,t)$ entries of $Z$ 
according to the joint ranking $\pi(j,t|\bw)$. Designed to merge $k$ 
{\bf sorted} sublists into a single sorted list, the $k$-way 
merge algorithm uses 1) $k$ pointers, one for each sorted sublist, and 2) a 
binary tree structure (either a heap or a selection tree) containing the elements pointed by these $k$ pointers
to obtain the next element to be appended into the sorted list~\cite[Chapter  
5.4.1.]{TAOCP3}. 

\subsubsection{Query-dependent Pre-processing}
\label{gmips:sec:query-dep-pre-proc}
In \gmips, we divide $nk$ entries of $(j,t)$ into $k$ groups. The $t$-th group 
contains $n$ entries: 
\[
  \cbr{(j,t): j=1,\ldots,n}.
\]
Here we need an {\em iterator} playing a similar role as the pointer which can iterate index 
$j\in \cbr{1,\ldots,n}$ in the {\em sorted} sequence induced by the 
conditional ranking $\pi_t(\cdot|\bw)$. Utilizing 
Property~\ref{gmips:prop:rank-inv}, the $t$-th pre-computed sorted arrays 
$\mathtt{s_t[r]},\ r=1,\ldots,n$ can be used to construct such an iterator, 
called {\tt ConditionalIterator}, which iterates an index $j$ one by one in 
the desired sorted sequence. {\tt ConditionalIterator} needs to support {\tt 
current()} to access the currently pointed index $j$ and {\tt 
getNext()} to advance the iterator.  In Algorithm~\ref{gmips:alg:cond-iter}, 
we describe a pseudo code for {\tt ConditionalIterator}, which utilizes the 
facts \eqref{gmips:eq:sorted-idx} and \eqref{gmips:eq:sorted-idx-2} such that 
both the construction and the index access cost $O(1)$ space and $O(1)$ time.   
For each $t$, we use $\mathtt{iters}[t]$ to denote the {\tt 
ConditionalIterator } for the $t$-th conditional ranking $\pi_t(j|\bw)$.  

\begin{algorithm}[t]
  \caption{Query-dependent pre-processing procedure in \gmips.}
  \label{gmips:alg:query-dep-proc}
     \begin{itemize}
       \item {\bf Input:} query $\bw \in \RR^k$
       \item For ${t}=1,\ldots,k$
         \begin{itemize}
           \item $\mathtt{iters}[t] \leftarrow \mathtt{ConditionalIterator}(t,\ w_{{t}}\mathtt{)}$
           \item $j \leftarrow \mathtt{iters}[t].\mathtt{current()}$
           \item $z \leftarrow h_{jt}w_t$
           \item $\mathtt{Q.push}\rbr{\rbr{z, t}}$
         \end{itemize}
       \item {\bf Output:}
         \begin{itemize}
           \item $\mathtt{iters[t]},\ t=1,\ldots,k$: iterators for conditional ranking $\pi_t(\cdot|\bw)$. 
           \item $\mathtt{Q}$: a max-heap containing $\cbr{(z,t)\mid z = \max_{j=1}^n z_{jt},\ t=1,\ldots,k}$.
         \end{itemize}
     \end{itemize}
\end{algorithm}

Regarding the binary tree structure used in \gmips, we consider a max-heap $\mathtt{Q}$ of 
$(z,t)$ pairs. $z\in\RR$ is the {\em compared key} used to maintain the heap property 
of $\mathtt{Q}$, and $t\in\cbr{1,\ldots,k}$ is an integer to denote the index 
to a entry group. Each $(z,t) \in \mathtt{Q}$ denotes the $(j,t)$ entry of $Z$ where 
\[
  {j=\mathtt{iters}[t].\mathtt{current()}\quad\text{and}\quad z = z_{jt} = 
  h_{jt}w_t}. 
\]
Note that there are most $k$ elements in the max-heap at any time. Thus, we 
can implement $\mathtt{Q}$ by a binary heap such that it supports 
\begin{itemize}
  \item $\mathtt{Q.top()}$: returns the maximum pair $(z,t)$ of $\mathtt{Q}$ in $O(1)$ time, 
  \item $\mathtt{Q.pop()}$: deletes the maximum pair of $\mathtt{Q}$ in $O(\log k)$ time, and 
  \item $\mathtt{Q.push}\rbr{\rbr{z,t}}$: inserts a new pair in $O(\log k)$ time.  
\end{itemize}
Note that the entire \gmips can also be implemented using a selection tree 
among the $k$ entries pointed by the $k$ iterators. For the simplicity of 
presentation, we use a max-heap to describe the idea of \gmips first and 
describe the details of \gmips with a selection tree in the end 
of Section~\ref{gmips:sec:screening}.   

In the {\em  query-dependent pre-processing} phase of \gmips, we need to 
construct $\mathtt{iters}[t],\ t=1,\ldots,k$, one for each conditional ranking 
$\pi_t(j|\bw)$, and a max-heap $\mathtt{Q}$ which is 
initialized to contain 
\[
\cbr{(z,t)\mid z = \max_{j=1}^n z_{jt},\ t=1,\ldots,k}. 
\]
A detailed procedure is described in Algorithm~\ref{gmips:alg:query-dep-proc}, 
which costs $O(k\log k)$ time and $O(k)$ space.  

\subsubsection{Candidate Screening} 
\label{gmips:sec:screening}
Recall the requirements for a viable candidate screening procedure to support 
budgeted MIPS: 1) the flexibility to control the size $\abs{\cC(\bw)}\le B$; and 2) an 
efficient procedure runs in $O(Bk)$. The core idea of \gmips is to iteratively traverse
$(j,t)$ entries of $Z$ in a {\em greedy} sequence and collect newly observed 
indices $j$ into $\cC(\bw)$ until $\abs{\cC(\bw)}=B$. In particular, if 
$r=\pi(j,t|\bw)$, then $(j,t)$ entry is visited at the $r$-th iterate. 
Utilizing the max-heap $\mathtt{Q}$ and the $k$ iterators: $\mathtt{iters}[t]$, 
we can design an iterator, called {\tt JointIterator}, which iterates $(j,t)$ 
pairs one by one in the desired greedy sequence induced by joint ranking 
$\pi(j,t|\bw)$. Following the $k$-way merge algorithm, in 
Algorithm~\ref{gmips:alg:joint-iter}, we describe a detailed pseudo code for 
such an iterator. {\tt JointIterator} costs $O(k\log k)$ 
time to run Algorithm~\ref{gmips:alg:query-dep-proc} to construct and 
initialize $\mathtt{Q}$ and $\mathtt{iters}[t]$, and costs $O(\log k)$ 
time to advance to the next entry. In Algorithm~\ref{gmips:alg:screen}, we 
describe our first {\em candidate screening} procedure with a budget 
$B$ for \gmips, which is a simple {\bf while}-loop to iterate $(j,t)$ entries 
using the {\tt JointIterator} with $\bw$ until $\abs{\cC(\bw)} = B$. 

\begin{algorithm}
  \caption{{\tt JointIterator:} an iterator generates $(j,t)$ pairs 
  one by one based on the joint ranking $\pi(j,t|\bw)$. The constructor costs 
  $O(k\log k)$ time to build a max-heap $\mathtt{Q}$. The time complexity to 
generate a pair is $O(\log k)$. }
  \label{gmips:alg:joint-iter}
  \begin{itemize}[leftmargin=0em]
   \item[] {\tt{{\bf class} JointIterator:}}  
      \begin{itemize}[leftmargin=1.5em]
        \item[] {\tt {\bf def} constructor}$(\bw)$: \hfill $\cdots O(k\log k)$
          \begin{itemize}[leftmargin=1em]
            \item[] Run Algorithm~\ref{gmips:alg:query-dep-proc} with $\bw$ to initialize
              $\mathtt{Q}$ and $\mathtt{iters}[t],\ t=1,\ldots,k$
            \item[]$\mathtt{ptr\leftarrow 1}$.
          \end{itemize}
        \item[] {\tt {\bf def} current():} \hfill $\cdots O(1)$
          \begin{itemize}[leftmargin=1.5em]
            \item[] $(z,t) \leftarrow \mathtt{Q.top()}$ 
            \item[] $j \leftarrow \mathtt{iters}[t]\mathtt{.current()}$
            \item[] {\bf return }$(j,t)$
          \end{itemize}
        \item[] {\tt {\bf def} hasNext():} 
          {\tt{\bf return} $(\mathtt{ptr} < nk)$} \hfill $\cdots O(1)$
        \item[] {\tt {\bf def} getNext():}\hfill $\cdots O(\log k)$
          \begin{itemize}[leftmargin=1.5em]
            \item[] $(z,t) \leftarrow \mathtt{Q.pop()}$ \hfill $\cdots O(\log k)$
            \item[] {\bf if} $\mathtt{iters}[t]\mathtt{.hasNext()}$:
              \begin{itemize}[leftmargin=1.5em]
                \item[] $j \leftarrow \mathtt{iters}[t]\mathtt{.getNext()}$
                \item[] $z\leftarrow h_{jt} w_t$
                \item[] $\mathtt{Q.push}\rbr{\rbr{z,t}}$ \hfill $\cdots O(\log k)$
              \end{itemize}
            \item[] $\mathtt{ptr\leftarrow ptr+1}$
            \item[] {\tt {\bf return} current()}
          \end{itemize}
      \end{itemize}
    \end{itemize}
\end{algorithm}
\begin{algorithm}
  \caption{Candidate screening procedure in \gmips.}
  \label{gmips:alg:screen}
  \begin{itemize}
    \item {\bf Input:} $\bw$ and an empty $\cC(\bw)$
    \item $\mathtt{jointIter} \leftarrow \mathtt{JointIterator}(\bw)$ \hfill $\cdots O(k\log k)$
    \item $(j,t) \leftarrow \mathtt{jointIter.current()}$
    \item {\bf while} $\abs{\cC(\bw)} < B$:
      \begin{itemize}
        \item {\bf if} $j \notin \cC(\bw)$: append $j$ to $\cC(\bw)$
        \item $(j,t)\leftarrow \mathtt{jointIter.getNext()}$ \hfill $\cdots O(\log k)$
      \end{itemize}
    \item {\bf Output:} $\cC(\bw)=\cbr{j\mid \pibar(j|\bw) \le B}$ 
  \end{itemize}
\end{algorithm}

To analyze the time complexity of Algorithm~\ref{gmips:alg:screen}, we need to 
know the number of the iterations of the {\bf while}-loop before the stop 
condition is satisfied. The following Theorem~\ref{gmips:thm:bounded-iter} 
gives an upper bound on this number of iterations.  
\begin{theorem}
  \label{gmips:thm:bounded-iter}
  There are at least $B$ distinct indices $j$ in the first $Bk$ entries $(j,t)$ 
  in terms of the joint ranking $\pi(j,t|\bw)$ for any $\bw$; that is,  
  \begin{align}
    \abs{\cbr{j \mid \forall (j,t) \text{ such that } \pi(j,t|\bw) \le Bk}} 
    \ge B.  \label{gmips:eq:bounded-iter}
  \end{align}
\end{theorem}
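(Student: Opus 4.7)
The plan is to establish the bound via a straightforward pigeonhole argument on the structure of the implicit matrix $Z$. Let $S = \{(j,t) : \pi(j,t|\bw) \le Bk\}$ denote the set of the first $Bk$ entries in the joint ranking. Since $\pi(\cdot,\cdot|\bw)$ is a bijection between $\{1,\ldots,n\}\times\{1,\ldots,k\}$ and $\{1,\ldots,nk\}$, we have exactly $|S| = Bk$ (assuming $Bk \le nk$, i.e., $B \le n$, which is the nontrivial regime; otherwise the claim is vacuous since the total number of distinct $j$'s is $n$).

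Next I would argue that for any fixed index $j_0 \in \{1,\ldots,n\}$, the number of pairs in $S$ with first coordinate $j_0$ is at most $k$, simply because $\{(j_0,1),(j_0,2),\ldots,(j_0,k)\}$ contains only $k$ entries in total. Let $D = |\{j : \exists\, t \text{ with } (j,t) \in S\}|$ denote the number of distinct first coordinates appearing in $S$. Partitioning $S$ by its first coordinate gives
\begin{equation*}
Bk \;=\; |S| \;=\; \sum_{j \in \{1,\ldots,n\}} |\{t : (j,t)\in S\}| \;\le\; D \cdot k,
\end{equation*}
and dividing by $k$ yields $D \ge B$, which is exactly \eqref{gmips:eq:bounded-iter}.

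There is essentially no obstacle here; the statement is a counting fact about the rows of a matrix with $k$ columns, and no use of the specific values of $\bw$ or $H$ is required beyond the fact that $\pi(j,t|\bw)$ is a total order on the $nk$ entries (and hence $S$ has the stated cardinality). The only minor care needed is to handle ties in $z_{jt}$ values consistently with Definition~\ref{gmips:def:rank} so that $\pi(j,t|\bw)$ induces a well-defined set of $Bk$ top entries; any fixed tie-breaking rule works, and the pigeonhole step is unaffected.
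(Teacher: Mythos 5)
Your proof is correct and is essentially the same pigeonhole argument as the paper's, just partitioned along the other axis: the paper groups the first $Bk$ entries by the column index $t$ and finds one group with at least $B$ entries (all of which carry distinct $j$'s), whereas you group by the row index $j$ and bound each group's size by $k$ to get $D \ge B$. The two counting arguments are dual and equally valid; your added remarks on the $B > n$ case and on tie-breaking are sensible but not needed for the core step.
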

\begin{proof}
  By grouping these first $Bk$ entries by the index $t$ and applying the pigeonhole 
  principle, we know that there exists a group $G$ such that it contains at least 
  $B$ entries. Because each entry in the same group has a distinct $j$ index, 
  we know that the group $G$ contains at least $B$ distinct indices $j$. 
\end{proof}

Theorem~\ref{gmips:thm:joint-greedy} guarantees the correctness of 
Algorithm~\ref{gmips:alg:screen} to generate $\cC(\bw)$ based on 
$\pibar(j|\bw)$ defined in \eqref{gmips:eq:greedy-rank}. By 
Theorem~\ref{gmips:thm:bounded-iter}, the overall time complexity of 
Algorithm~\ref{gmips:alg:screen} is $O(Bk\log k)$ as each iteration of the 
{\bf while}-loop costs $O(\log k)$ time.  

The $O(Bk\log k)$ time complexity of Algorithm~\ref{gmips:alg:screen} does not 
satisfy the efficiency requirement of a viable budgeted MIPS approach. Here we 
propose an improved candidate screening procedure which reduces the 
overall time complexity to $O(Bk)$. Observe that the $\log k$ term comes from 
the $\mathtt{Q.push\rbr{\rbr{z_{jt},t}}}$ and $\mathtt{Q.pop()}$ operations of the max-heap for 
each visited $(j,t)$ entry. As the goal of the screening procedure is to identify $j$ 
indices only, we can skip the $\mathtt{Q.push\rbr{\rbr{z_{jt},t}}}$ for an 
entry $(j,t)$ with the $j$ having been included in $\cC(\bw)$. As a result, 
$\mathtt{Q.pop()}$ is executed at most $B+k-1$ times when $\abs{\cC(\bw)}=B$. 
The extra $k-1$ times occurs in the situation that 
\[
  \mathtt{iters}[1]\mathtt{.current()} =  
  \mathtt{iters}[2]\mathtt{.current()} = \cdots = \mathtt{iters}[k]\mathtt{.current()}
\]
at the beginning of the entire screening procedure. 

\begin{algorithm}[t]
  \caption{An improved candidate screening procedure in \gmips. The 
    overall time complexity is $O(Bk)$.}  
\label{gmips:alg:improved-screen}
\begin{itemize}
  \item {\bf Input:} 
    \begin{itemize}
      \item $\cH$, $\bw$, and the computational budget $B$
      \item $\mathtt{Q}$ and $\mathtt{iters}[t]$: output of 
        Algorithm~\ref{gmips:alg:query-dep-proc} 
      \item $\cC(\bw)$: an empty list
      \item $\mathtt{visited}[j]=0,\ j = 1,\ldots,n$: a zero-initialized array of length $n$
    \end{itemize}
  \item {\bf while} $\abs{\cC(\bw)} < B$:
    \begin{itemize}
      \item $(z,t) \leftarrow \mathtt{Q.pop()}$ \hfill $\cdots O(\log k)$
      \item $j \leftarrow \mathtt{iters}[t]\mathtt{.current()}$
      \item {\bf if} $\mathtt{visited}[j] = 0$: 
        \begin{itemize}
          \item append $j$ into $\cC(\bw)$
          \item $\mathtt{visited}[j] \leftarrow 1$
        \end{itemize}
      \item {\bf while} $\mathtt{iters}[t]\mathtt{.hasNext()}$:
        \begin{itemize}
          \item $j \leftarrow \mathtt{iters}[t]\mathtt{.getNext()}$
          \item {\bf if} {$\mathtt{visited}[j] = 0$}:
            \begin{itemize}
              \item $z\leftarrow h_{jt} w_t$
              \item $\mathtt{Q.push}\rbr{\rbr{z,t}}$ \hfill $\cdots O(\log k)$
              \item {\bf break}
            \end{itemize}
        \end{itemize}
    \end{itemize}
  \item $\mathtt{visited}[j]\leftarrow 0, \forall j \in \cC(\bw)$\hfill 
    $\cdots O(B)$
  \item {\bf Output:} $\cC(\bw) = \cbr{j \mid \pibar(j|\bw) \le B}$
\end{itemize}
\end{algorithm}

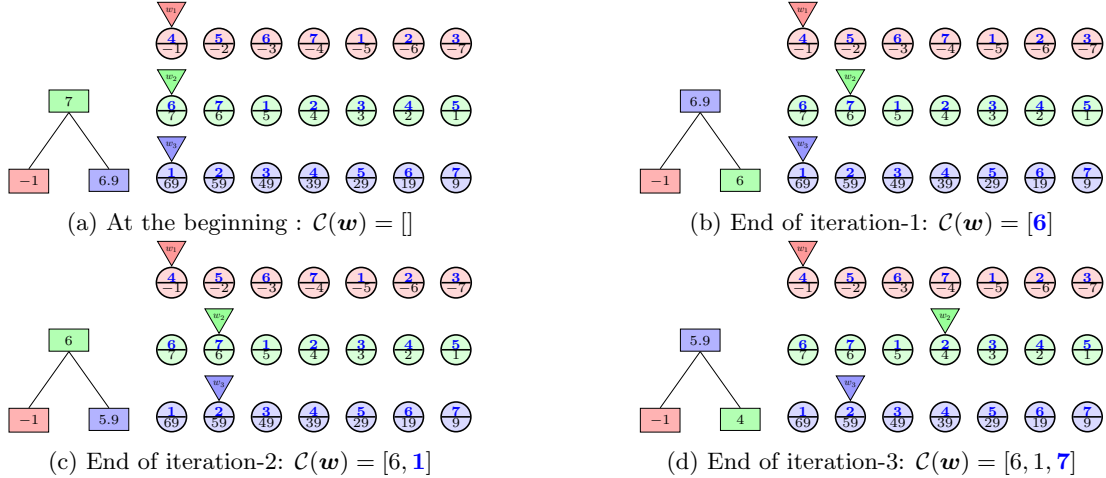
\begin{figure}[!ht]
  \centering
  \begin{subfigure}[t]{0.48\linewidth}
    \centering
      \begin{rescale}[0.75]{0.75}
          \begin{tikzpicture} 
            \tikzset{level distance=4em, sibling distance=2em}
            {\drawHeap{2}{7}{1}{-1}{3}{6.9}}
          \end{tikzpicture}
          \quad
          \begin{tikzpicture} [
              scale=0.7,
            triangle/.style = {regular polygon, regular polygon sides=3, shape border rotate=180}
            ]
              \foreach \x in {1,...,\rofuK} {    
                \pgfmathparse{\colorPallete[\x-1]};
                \definecolor{curcolor}{rgb}{\pgfmathresult};
                \foreach \y in {1,...,\rofuN} {
                  \pgfmathparse{int(mod(2*\x+\y,\rofuN)+1)}
                  \edef\rofuIdx{\pgfmathresult}
                  \ifthenelse{\x=1}{\pgfmathparse{int(-\y)}}{};
                  \ifthenelse{\x=2}{\pgfmathparse{int(\rofuN-\y+1)}}{};
                  \ifthenelse{\x=3}{\pgfmathparse{int(10*(\rofuN-\y+1)-1)}}{};
                  \edef\rofuVal{\pgfmathresult}
                  \node[circle split, inner sep=0,fill=curcolor!15!white,draw,thick,minimum 
                  width=1.5em] at (1.2*\y-1.2*0.5,1.7*\rofuK-1.7*\x-1.7*0.5) (rofu\x\y) 
                  {\scriptsize \good{\bf \rofuIdx} \nodepart{lower} \scriptsize {$\rofuVal$}};
                }
              };
              { \drawWidx{1}{1}{1}}

          \end{tikzpicture}
      \end{rescale}
      \caption{At the beginning : $\cC(\bw) = []$}
    \label{gmips:fig:gmips-1}
  \end{subfigure} \quad
  \begin{subfigure}[t]{0.48\linewidth}
    \centering
      \begin{rescale}[0.75]{0.75}
          \begin{tikzpicture} 
            \tikzset{level distance=4em, sibling distance=2em}
            {\drawHeap{3}{6.9}{1}{-1}{2}{6}}
          \end{tikzpicture}
          \quad
          \begin{tikzpicture} [
              scale=0.7,
            triangle/.style = {regular polygon, regular polygon sides=3, shape border rotate=180}
            ]
              \foreach \x in {1,...,\rofuK} {    
                \pgfmathparse{\colorPallete[\x-1]};
                \definecolor{curcolor}{rgb}{\pgfmathresult};
                \foreach \y in {1,...,\rofuN} {
                  \pgfmathparse{int(mod(2*\x+\y,\rofuN)+1)}
                  \edef\rofuIdx{\pgfmathresult}
                  \ifthenelse{\x=1}{\pgfmathparse{int(-\y)}}{};
                  \ifthenelse{\x=2}{\pgfmathparse{int(\rofuN-\y+1)}}{};
                  \ifthenelse{\x=3}{\pgfmathparse{int(10*(\rofuN-\y+1)-1)}}{};
                  \edef\rofuVal{\pgfmathresult}
                  \node[circle split, inner sep=0,fill=curcolor!15!white,draw,thick,minimum 
                  width=1.5em] at (1.2*\y-1.2*0.5,1.7*\rofuK-1.7*\x-1.7*0.5) (rofu\x\y) 
                  {\scriptsize \good{\bf \rofuIdx} \nodepart{lower} \scriptsize {$\rofuVal$}};
                }
              };
              { \drawWidx{1}{2}{1}}

          \end{tikzpicture}
      \end{rescale}
      \caption{End of iteration-$1$: $\cC(\bw) = [\good{\bf 6}]$}
    \label{gmips:fig:gmips-2}
  \end{subfigure}
  \\
  \begin{subfigure}[t]{0.48\linewidth}
    \centering
      \begin{rescale}[0.75]{0.75}
          \begin{tikzpicture} 
            \tikzset{level distance=4em, sibling distance=2em}
            {\drawHeap{2}{6}{1}{-1}{3}{5.9}}
          \end{tikzpicture}
          \quad
          \begin{tikzpicture} [
              scale=0.7,
            triangle/.style = {regular polygon, regular polygon sides=3, shape border rotate=180}
            ]
              \foreach \x in {1,...,\rofuK} {    
                \pgfmathparse{\colorPallete[\x-1]};
                \definecolor{curcolor}{rgb}{\pgfmathresult};
                \foreach \y in {1,...,\rofuN} {
                  \pgfmathparse{int(mod(2*\x+\y,\rofuN)+1)}
                  \edef\rofuIdx{\pgfmathresult}
                  \ifthenelse{\x=1}{\pgfmathparse{int(-\y)}}{};
                  \ifthenelse{\x=2}{\pgfmathparse{int(\rofuN-\y+1)}}{};
                  \ifthenelse{\x=3}{\pgfmathparse{int(10*(\rofuN-\y+1)-1)}}{};
                  \edef\rofuVal{\pgfmathresult}
                  \node[circle split, inner sep=0,fill=curcolor!15!white,draw,thick,minimum 
                  width=1.5em] at (1.2*\y-1.2*0.5,1.7*\rofuK-1.7*\x-1.7*0.5) (rofu\x\y) 
                  {\scriptsize \good{\bf \rofuIdx} \nodepart{lower} \scriptsize {$\rofuVal$}};
                }
              };
              { \drawWidx{1}{2}{2}}

          \end{tikzpicture}
      \end{rescale}
      \caption{End of iteration-$2$: $\cC(\bw) = [6,\good{\bf 1}]$}
    \label{gmips:fig:gmips-3}
  \end{subfigure} \quad
  \begin{subfigure}[t]{0.48\linewidth}
    \centering
      \begin{rescale}[0.75]{0.75}
          \begin{tikzpicture} 
            \tikzset{level distance=4em, sibling distance=2em}
            {\drawHeap{3}{5.9}{1}{-1}{2}{4}}
          \end{tikzpicture}
          \quad
          \begin{tikzpicture} [
              scale=0.7,
            triangle/.style = {regular polygon, regular polygon sides=3, shape border rotate=180}
            ]
              \foreach \x in {1,...,\rofuK} {    
                \pgfmathparse{\colorPallete[\x-1]};
                \definecolor{curcolor}{rgb}{\pgfmathresult};
                \foreach \y in {1,...,\rofuN} {
                  \pgfmathparse{int(mod(2*\x+\y,\rofuN)+1)}
                  \edef\rofuIdx{\pgfmathresult}
                  \ifthenelse{\x=1}{\pgfmathparse{int(-\y)}}{};
                  \ifthenelse{\x=2}{\pgfmathparse{int(\rofuN-\y+1)}}{};
                  \ifthenelse{\x=3}{\pgfmathparse{int(10*(\rofuN-\y+1)-1)}}{};
                  \edef\rofuVal{\pgfmathresult}
                  \node[circle split, inner sep=0,fill=curcolor!15!white,draw,thick,minimum 
                  width=1.5em] at (1.2*\y-1.2*0.5,1.7*\rofuK-1.7*\x-1.7*0.5) (rofu\x\y) 
                  {\scriptsize \good{\bf \rofuIdx} \nodepart{lower} \scriptsize {$\rofuVal$}};
                }
              };
              { \drawWidx{1}{4}{2}}
          \end{tikzpicture}
      \end{rescale}
      \caption{End of iteration-$3$: $\cC(\bw) = [6,1,\good{\bf 7}]$}
    \label{gmips:fig:gmips-4}
  \end{subfigure}
  \caption[Illustration of \gmips.]{Illustration of 
    Algorithm~\ref{gmips:alg:improved-screen} with $\bw = [1, 1, 0.1]^\top$ 
    and $B=3$. 
    The left plot for each sub-figure shows the heap structure in the 
    max-heap $\mathtt{Q}$: the value in each rectangle denotes $z$, and each 
    index $t$ is shown in a different color (red for $1$, green for $2$, and 
    blue for $3$). The sorted index arrays are shown in the upper part of 
    circles on the right plot for each sub-figure; for example, 
    $\mathtt{s}_1[4]=7$, $\mathtt{s}_2[1]=6$, and $\mathtt{s}_3[5]=5$. The value in lower part of 
    circles is the corresponding $h_{jt}$; for example, $h_{71}=-4$,
    $h_{62}=7$, and $h_{53}=29$. Three downward triangles denote the current position of 
    $\mathtt{iters}[t],\ t=1,2,3$. Figure~\ref{gmips:fig:gmips-1} shows the 
    status for each data data structure at the beginning of 
    Algorithm~\ref{gmips:alg:improved-screen}. Three pairs are pushed into 
    $\mathtt{Q}$: $(-1=h_{41}w_1, 1)$, $(7=h_{71}w_2, 2)$, and 
    $(6.9=h_{13}w_3, 3)$. 
    Figures~\ref{gmips:fig:gmips-2}-\ref{gmips:fig:gmips-3} show the status 
    in the end of the first and the second iterations of the outer {\bf 
    while}-loop in Algorithm~\ref{gmips:alg:improved-screen}. In 
    Figure~\ref{gmips:fig:gmips-3}, we show that at the third 
    iteration, after $(z,t)=(6,2) \leftarrow \mathtt{Q.pop()}$ is executed and 
    $7=\mathtt{iters}[2]\mathtt{.current()}$ is appended into $\cC(\bw)$, we need to 
    advance $\mathtt{iters[2]}$ twice because the index $j=1$ has been 
    included in $\cC(\bw)$. Note that for this example $\bh_1$ is the 
    candidate with the largest inner product value with $\bw$. 
}
  \label{gmips:fig:gmips-illustration}
\end{figure}
In Algorithm~\ref{gmips:alg:improved-screen}, we give a detailed description for 
this improved candidate screening procedure for \gmips. 
See Figure~\ref{gmips:fig:gmips-illustration} for a detailed illustration of 
this algorithm  on a toy example.  
Note that in Algorithm~\ref{gmips:alg:improved-screen}, we use an auxiliary 
zero-initialized array of length $n$:  
$\mathtt{visited}[j],\ j=1,\ldots,n$ to denote whether an index $j$ has been 
included in $\cC(\bw)$ or not. As $\cC(\bw)$ contains at most $B$ indices, only 
$B$ elements of this auxiliary array will be modified during the screening 
procedure. Furthermore, the auxiliary array can be reset to zero using $O(B)$ 
time in the end of Algorithm~\ref{gmips:alg:improved-screen}, so this 
auxiliary array can be utilized again for a different query vector $\bw$. 

Notice that Algorithm~\ref{gmips:alg:improved-screen} still iterates $Bk$ 
entries of $Z$ but at most $B+k-1$ entries will be pushed into or pop from the 
max-heap. Thus, the overall time complexity of 
Algorithm~\ref{gmips:alg:improved-screen} is $O(Bk + B\log k) = O(Bk)$, which 
satisfies the efficiency requirement for a viable approach for budgeted MIPS. 

{\bf \gmips with a Selection Tree.} As there are at most $k$ pairs in the max-heap $\mathtt{Q}$, one 
from each $\mathtt{iters}[t]$, the max-heap can be replaced by a selection 
tree to achieve a slightly faster implementation as suggested in \cite[Chapter 
5.4.1]{TAOCP3}. In Algorithm~\ref{gmips:alg:selection-tree}, we give a pseudo 
code for the selection tree with a $O(k)$ time constructor, a $O(1)$ time 
maximum element look-up, and a $O(\log k)$ time updater. To apply the section 
tree for our \gmips, we only need to the following modifications:
\begin{itemize}
  \item In Algorithm~\ref{gmips:alg:query-dep-proc}, remove 
    $\mathtt{Q.push}((z,t))$ from the {\bf for}-loop and construct $\mathtt{Q}$ 
    by $\mathtt{Q} \leftarrow \mathtt{SelectionTree}(\bw, k, \mathtt{iters})$.
  \item In Algorithm~\ref{gmips:alg:joint-iter} and 
  Algorithm~\ref{gmips:alg:improved-screen}, replace $\mathtt{Q.pop()}$ by 
  $\mathtt{Q.top()}$ and replace $\mathtt{Q.push}((z,t))$ by 
  $\mathtt{Q.updateValue}(t, z)$. 
\end{itemize}
\begin{algorithm}[t]
  \caption{A pseudo code of a selection tree used for \gmips.}
  \label{gmips:alg:selection-tree}
  \begin{itemize}
    \item[] {\bf class} {\tt SelectionTree:} 
      \vspace{-0.8em}
      \begin{itemize}
        \item[]  {\bf def} $\mathtt{constructor}(\bw, k,\mathtt{iters}):$\hfill $\cdots O(k)$
          \begin{itemize}
            \item[] $\Kbar \leftarrow \min \cbr{2^i\mid 2^i \ge k}$
            \item[] {\bf for} ${i}=1,\ldots,2\Kbar$:
              \begin{itemize}
                \item[] $\mathtt{buf}[i] \leftarrow (-\infty,0)$
              \end{itemize}
            \item[] {\bf for} $t=1,\ldots,k$:
              \begin{itemize}
                \item[] $j \leftarrow \mathtt{iters}[t]\mathtt{.current()}$
                \item[] $\mathtt{buf}[\Kbar+t] \leftarrow (h_{jt}w_t,t)$
              \end{itemize}
            \item[] {\bf for} $i=\Kbar,\ldots,1$: 
              \begin{itemize}
                \item[] {\bf if} 
                  $\mathtt{buf}[2i]\mathtt{.first} > \mathtt{buf}[2i+1]\mathtt{.first}$: 
                \item[] $\qquad\mathtt{buf}[i] \leftarrow \mathtt{buf}[2i]$
                \item[] {\bf else}: 
                \item[] $\qquad\mathtt{buf}[i] \leftarrow \mathtt{buf}[2i+1]$
              \end{itemize}
          \end{itemize}
        \item[] {\bf def} $\mathtt{top}()$: {\bf return} $\mathtt{buf}[1]$ 
          \hfill $\cdots O(1)$
        \item[] {\bf def} $\mathtt{updateValue}(t, z)$: \hfill $\cdots O(\log k)$
          \begin{itemize}
            \item[] $i \leftarrow \Kbar + t$
            \item[] $\mathtt{buf}[i] \leftarrow (z, t)$
            \item[] {\bf while} $i > 1$:
              \begin{itemize}
                \item[] $i \leftarrow \floor{i/2}$
                \item[] {\bf if} 
                  $\mathtt{buf}[2i]\mathtt{.first} > \mathtt{buf}[2i+1]\mathtt{.first}$: 
                \item[] $\qquad\mathtt{buf}[i] \leftarrow \mathtt{buf}[2i]$
                \item[] {\bf else}: 
                \item[] $\qquad\mathtt{buf}[i] \leftarrow \mathtt{buf}[2i+1]$
              \end{itemize}
          \end{itemize}
      \end{itemize}
  \end{itemize}
  
\end{algorithm}

\subsubsection{Connection to Sampling-based MIPS Approaches}
\label{gmips:sec:discussion}
\smips, as mentioned earlier, is essentially a sampling algorithm with replacement scheme 
to draw entries of $Z$ such that $(j,t)$ is sampled with the probability proportional 
to $ z_{jt}$. Thus, \smips can be thought as a traversal of $(j,t)$ entries 
using in a stratified random sequence determined by a distribution of the {\em  values} of 
$\cbr{z_{jt}}$, while the core idea of \gmips is to iterate $(j,t)$ entries of  
$Z$ in a greedy sequence induced by the {\em ordering} of $\cbr{z_{jt}}$. 
Next, we discuss the differences between \gmips and \smips in a few 
perspectives: 

\paragraph{Applicability:} \smips can be applied to the situation where both 
$\cH$ and $\bw$ are nonnegative because of the nature of sampling scheme. In 
contrast, \gmips can work on any MIPS problems as only the ordering of 
$\cbr{z_{jt}}$ matters in \gmips. Instead of $\bh_j^\top\bw$, \dmips
is designed for the MSIPS problem which is to identify candidates 
with largest $\rbr{\bh_j^\top\bw}^2$ or $\abs{\bh_j^\top\bw}$ 
values~\cite{GB15a}. In fact, for nonnegative MIPS problems, the diamond 
sampling is equivalent to \smips. Moreover, for MSIPS problems with negative 
entries, when the number of samples is set to be the budget $B$,\footnote{This 
setting is used in the experiments in \cite{GB15a}} the \dmips
is equivalent to apply \smips to sample $(j,t)$  entries with the 
probability $p(j,t)\propto \abs{z_{jt}}$. Thus, the applicability of the 
existing sampling-based approaches is still very limited for general MIPS 
problems. 

\paragraph{Flexibility to Control $\abs{\cC(\bw)}$:} By 
Theorem~\ref{gmips:thm:bounded-iter}, we know that \gmips can guarantee both 
the time complexity of the candidate screening procedure and the size of 
output $\abs{\cC(\bw)}$ for any $\cH$, $\bw$, and $B$. For a sampling-based 
approach, one can easily control either the time complexity of the sampling 
procedure or the size of $\cC(\bw)$, but not both. Because all the existing 
sampling-based approaches are a sampling scheme with replacement, the same 
entry $(j,t)$ could be sampled repeatedly. Thus, the time complexity
to guarantee that $\cC(\bw)=B$ depends on the distribution of values of $\bw$ 
and $\cH$. Hence, \gmips is more flexible than sampling-based approaches in 
terms of the controllability of $\cC(\bw)$.

\section{Experimental Results}
\label{gmips:sec:exp}
In this section, we perform extensive empirical comparisons to compare \gmips 
with other state-of-the-art fast MIPS approaches on both real-world and 
synthetic datasets. 
\begin{itemize}
  \item We use \netflix and \yahoo as our real-world recommender 
  system datasets. There are $17,770$ and $624,961$ items in \netflix and 
  \yahoo, respectively. In particular, we obtain the low rank model $(W,H)$ by the 
  standard low-rank matrix factorization: 
  \[
    \min_{W,H} \sum_{(i,j)\in \Omega}\ \rbr{A_{ij} - \bw_i^\top\bh_j}^2 + 
    \lambda\rbr{\sum_{i=1}^m \frac{\abs{\Omega_i}}{n} \norm{\bw_i}^2 + 
    \sum_{j=1}^n\frac{\abs{\Omegabar_j}}{m} \norm{\bh_j}^2},
  \]
  where $A_{ij}$ is the rating of the $j$-th item given by the $i$-th user, 
  $\Omega$ is the set of observed ratings, $\Omega_i=\cbr{j\mid (i,j)\in\Omega}$, 
  and $\Omegabar_j = \cbr{i\mid (i,j)\in \Omega}$, and $\lambda$ is a 
  regularization parameter. We use the CCD++~\cite{HFY14a} algorithm implemented in 
  LIBPMF\footnote{\url{http://www.cs.utexas.edu/~rofuyu/libpmf}} to solve the  
  above optimization problem and obtain the user embeddings $\cbr{\bw_i}$ and 
  item embeddings $\cbr{\bh_j}$. We use the same $\lambda$ used 
  in~\cite{WSC15b}. We also obtain $(W,H)$ with a different $k$: $50$, $100$, 
  and $k=200$. 
\item We also generate synthetic datasets with various 
  $n = 2^{\cbr{17,18,19,20}}$ and $k=2^{\cbr{2,5,7,10}}$. For each synthetic 
  dataset, both candidate vector $\bh_j$ and query $\bw$ vector are drawn from 
  the normal distribution. 
\end{itemize}

\begin{figure}[!t]
  \centering
  \caption{Comparison of variants of \gmips. }
  \label{gmips:fig:gmips-comp}
    \begin{resize}{1\linewidth}
      \begin{tabular}{@{}c@{}c@{}c@{}c@{}}
        \begin{subfigure}[h]{0.35\linewidth}
          \includegraphics[width=1\linewidth]{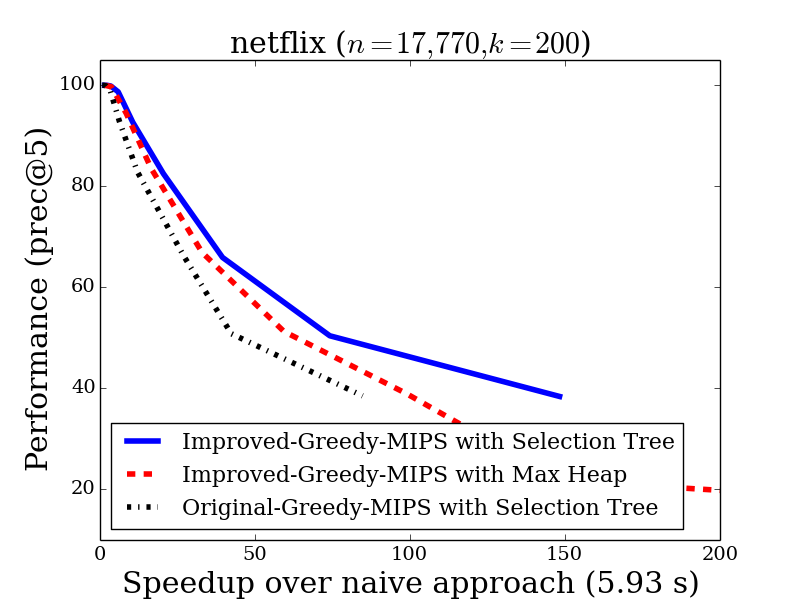}
        \end{subfigure}
        \hspace{-\leftshift}
        \begin{subfigure}[h]{0.35\linewidth}
          \includegraphics[width=1\linewidth]{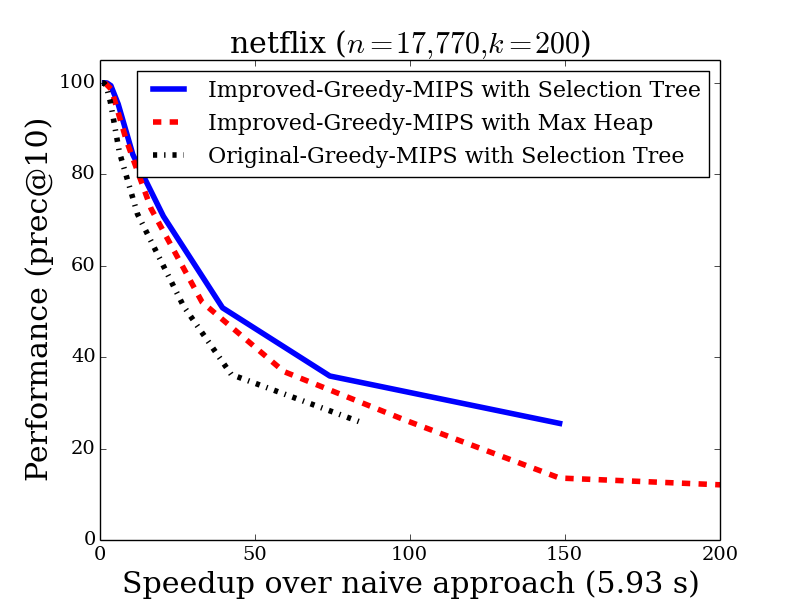}
        \end{subfigure}
        \\ 
      \begin{subfigure}[h]{0.35\linewidth}
        \includegraphics[width=1\linewidth]{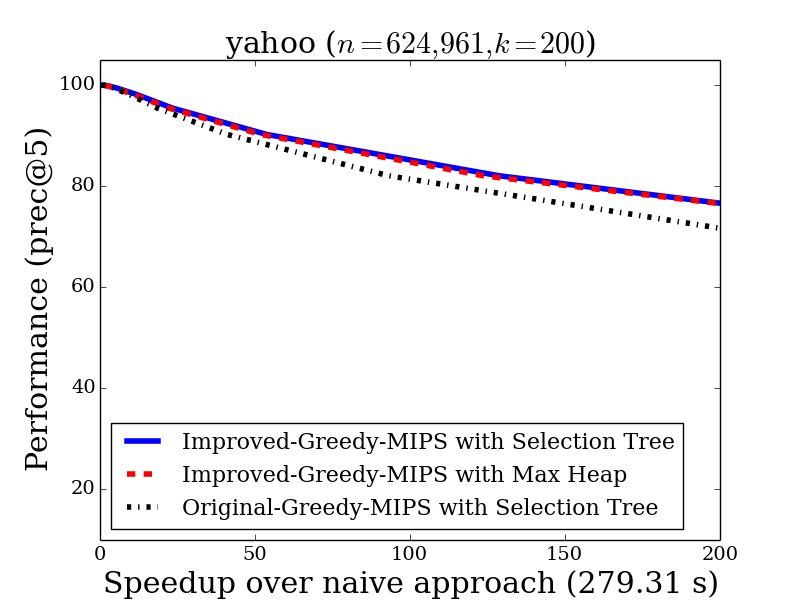}
      \end{subfigure}
        \hspace{-\leftshift}
        \begin{subfigure}[h]{0.35\linewidth}
          \includegraphics[width=1\linewidth]{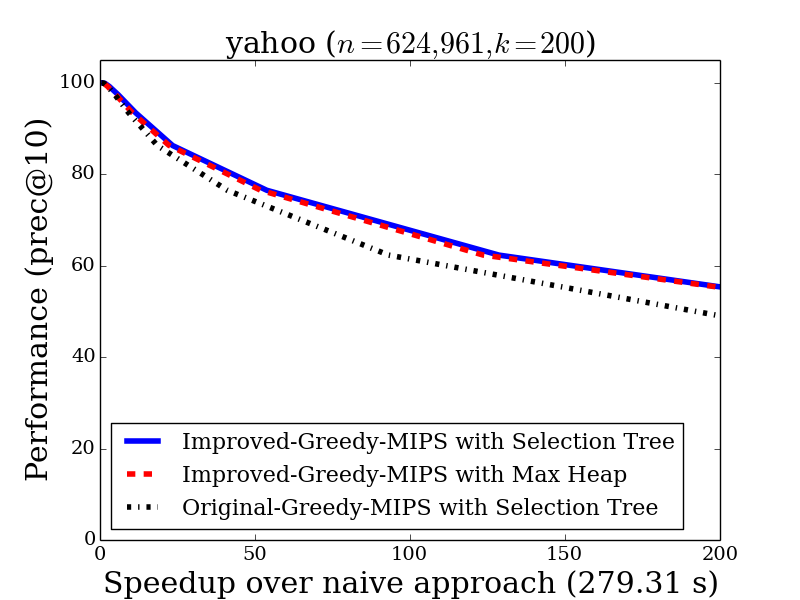}
      \end{subfigure}
      \end{tabular}
    \end{resize}
\end{figure}

\begin{figure}[!th]
  \centering
    \begin{resize}{1\linewidth}
      \begin{tabular}{@{}c@{}}
        \begin{subfigure}[h]{0.35\linewidth}
          \includegraphics[width=1\linewidth]{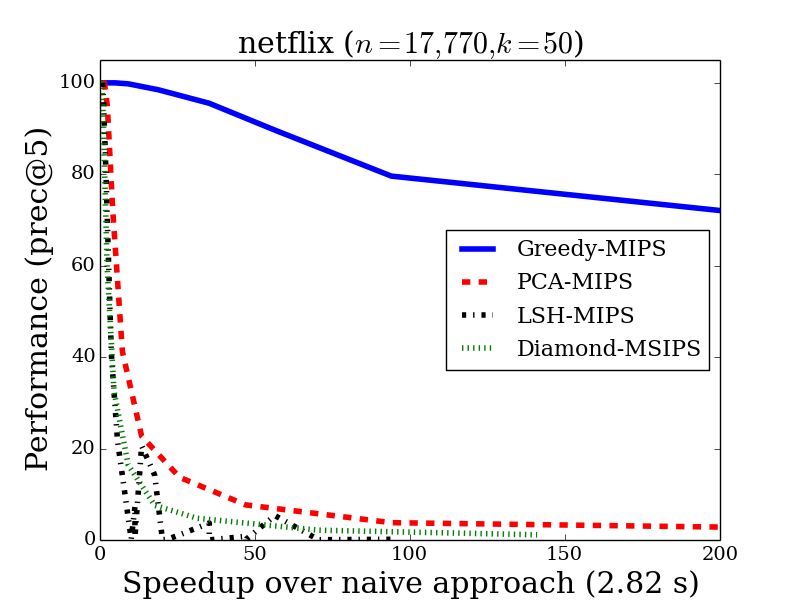}
        \end{subfigure}
        \hspace{-\leftshift}
        \begin{subfigure}[h]{0.35\linewidth}
          \includegraphics[width=1\linewidth]{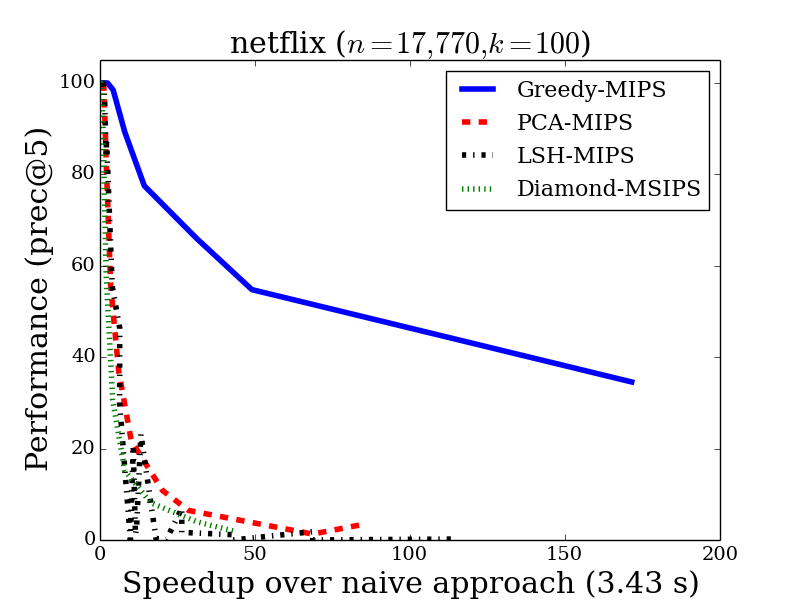}
        \end{subfigure}
        \hspace{-\leftshift}
        \begin{subfigure}[h]{0.35\linewidth}
          \includegraphics[width=1\linewidth]{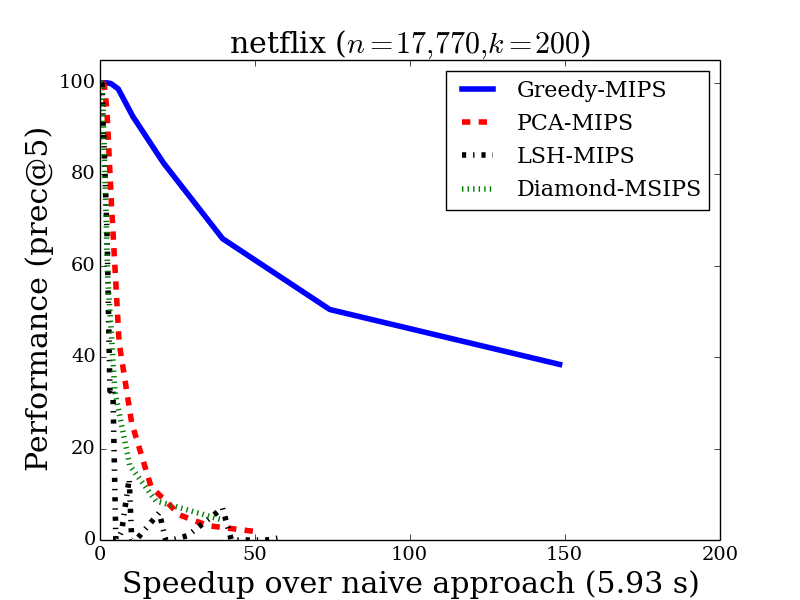}
        \end{subfigure}
        \\
        \begin{subfigure}[h]{0.35\linewidth}
          \includegraphics[width=1\linewidth]{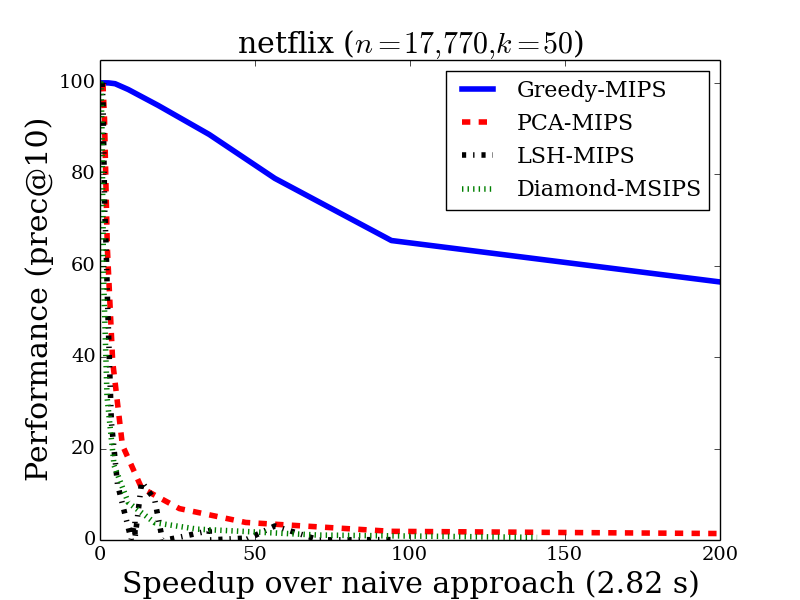}
        \end{subfigure}
        \hspace{-\leftshift}
        \begin{subfigure}[h]{0.35\linewidth}
          \includegraphics[width=1\linewidth]{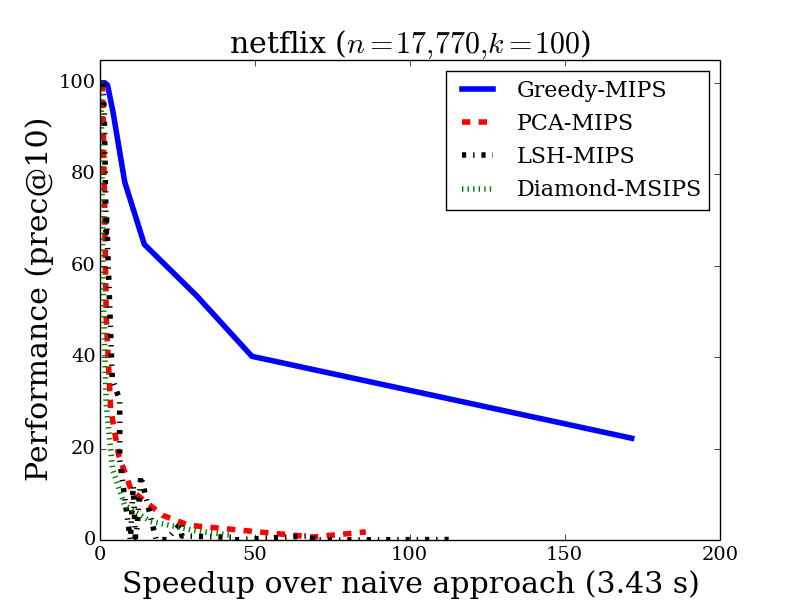}
        \end{subfigure}
        \hspace{-\leftshift}
        \begin{subfigure}[h]{0.35\linewidth}
          \includegraphics[width=1\linewidth]{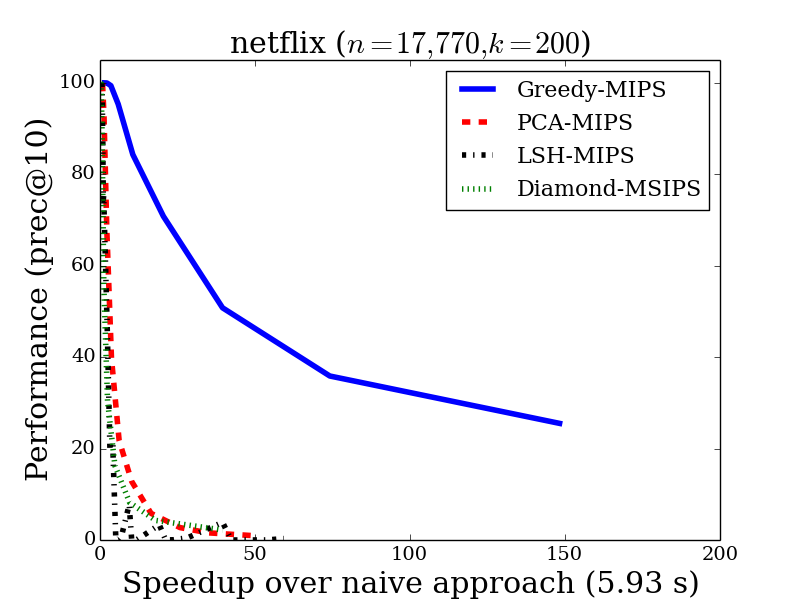}
        \end{subfigure}
        \\
        \\
        \hline
        \hline
        \begin{subfigure}[h]{0.35\linewidth}
          \includegraphics[width=1\linewidth]{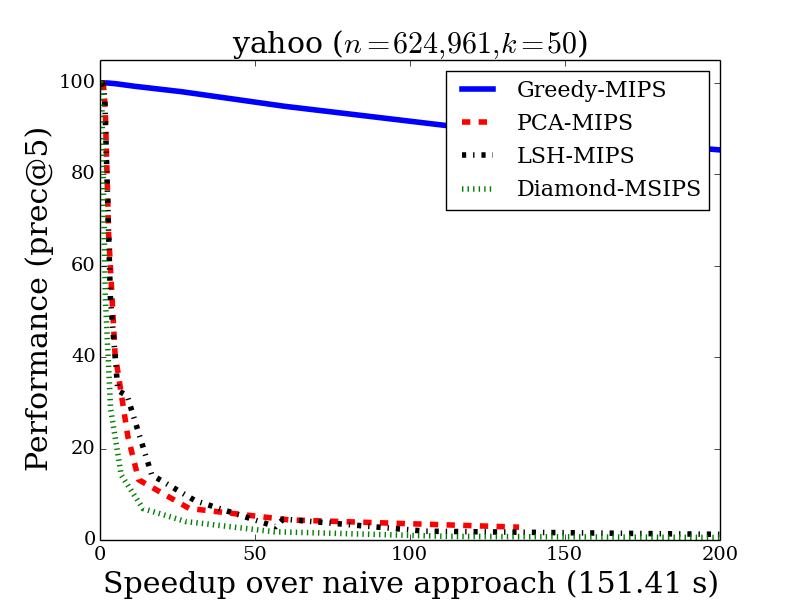}
        \end{subfigure}
        \hspace{-\leftshift}
        \begin{subfigure}[h]{0.35\linewidth}
          \includegraphics[width=1\linewidth]{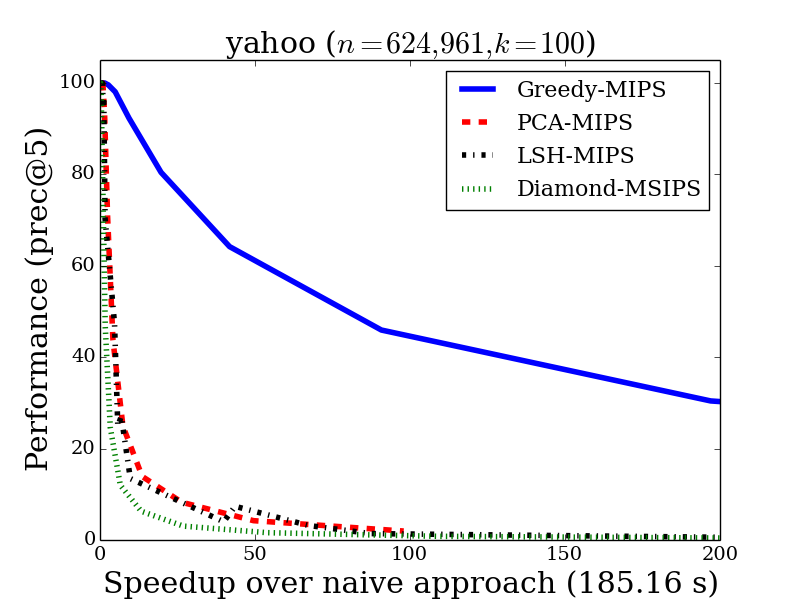}
        \end{subfigure}
          \hspace{-\leftshift}
        \begin{subfigure}[h]{0.35\linewidth}
          \includegraphics[width=1\linewidth]{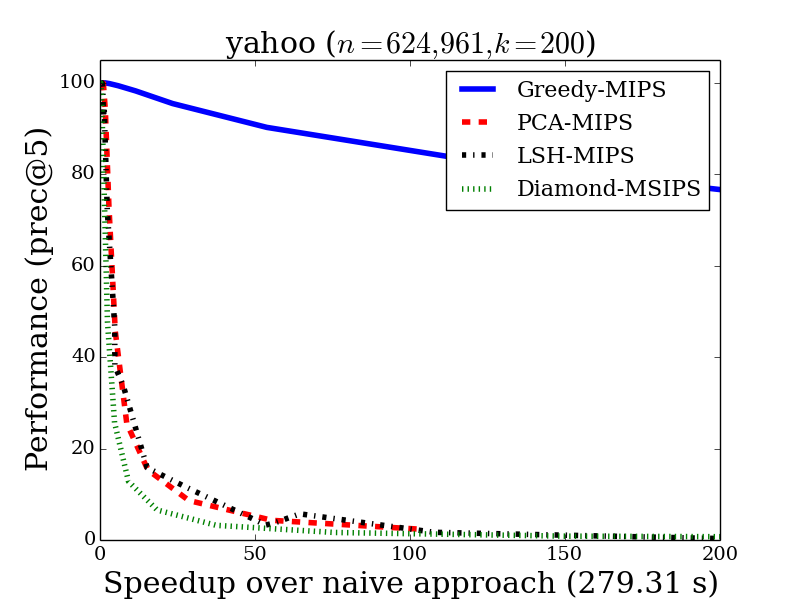}
        \end{subfigure}
        \\
        \begin{subfigure}[h]{0.35\linewidth}
          \includegraphics[width=1\linewidth]{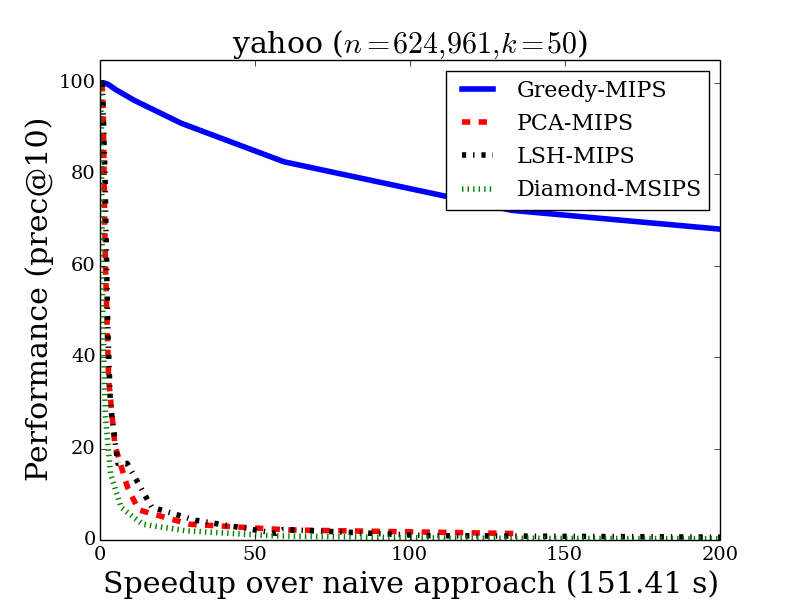}
        \end{subfigure} 
        \hspace{-\leftshift}
        \begin{subfigure}[h]{0.35\linewidth}
          \includegraphics[width=1\linewidth]{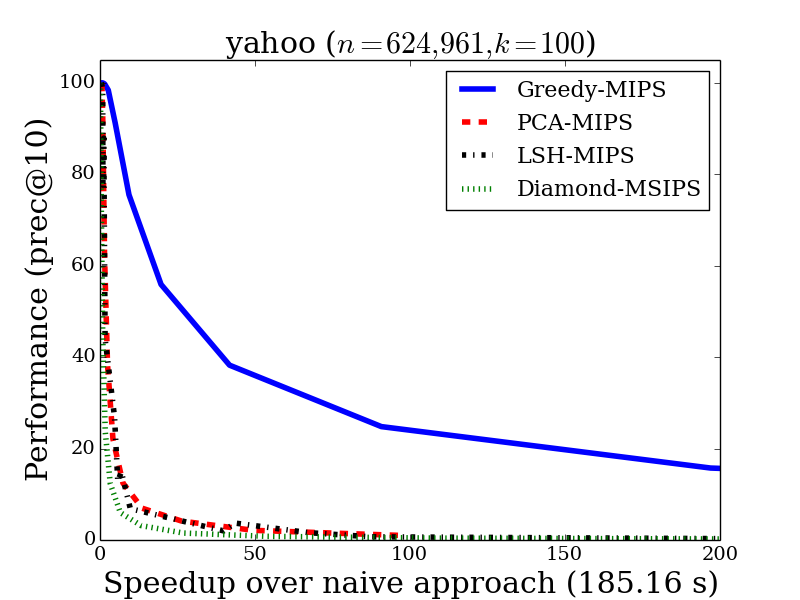}
        \end{subfigure}
        \hspace{-\leftshift}
        \begin{subfigure}[h]{0.35\linewidth}
          \includegraphics[width=1\linewidth]{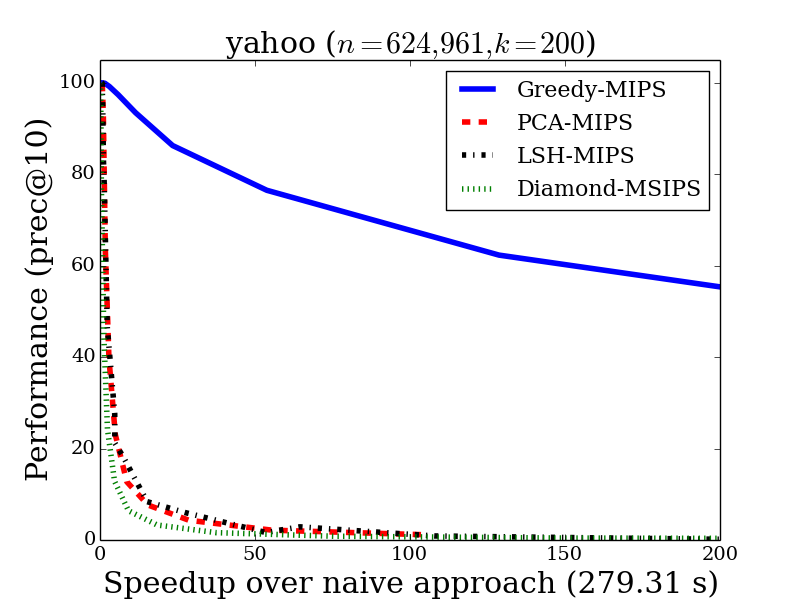}
        \end{subfigure}
      \end{tabular}
    \end{resize}
  \caption{MIPS Comparison on \netflix and \yahoo. }
  \label{gmips:fig:realworld-comp}
\end{figure}

\begin{figure}[!th]
  \centering
  \begin{tabular}{@{}c@{}c@{}}
    \begin{subfigure}[h]{0.45\linewidth}
      \includegraphics[width=1\linewidth]{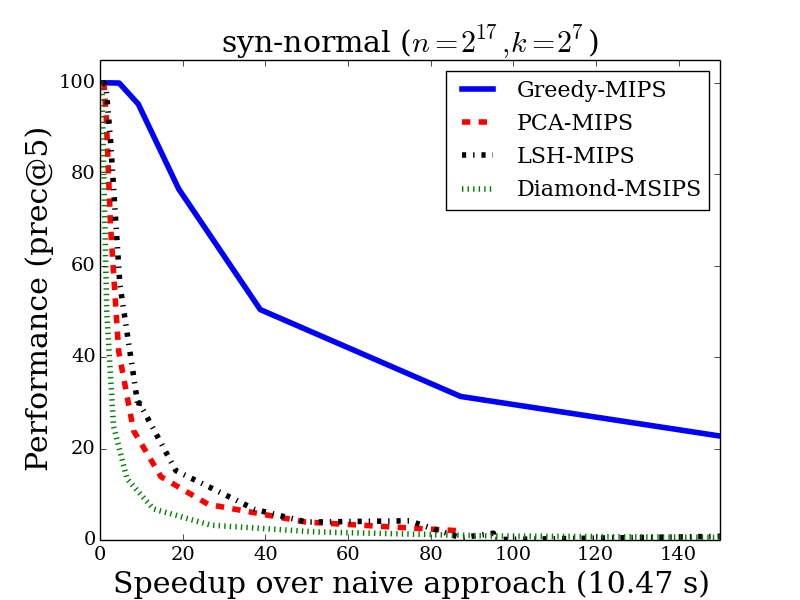}
    \end{subfigure}
    \begin{subfigure}[h]{0.45\linewidth}
      \includegraphics[width=1\linewidth]{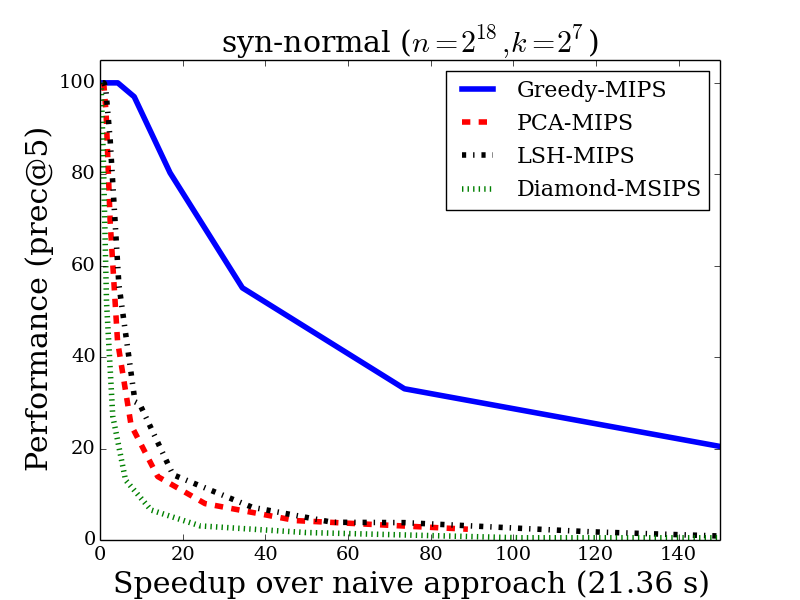}
    \end{subfigure} \\
    \begin{subfigure}[h]{0.45\linewidth}
      \includegraphics[width=1\linewidth]{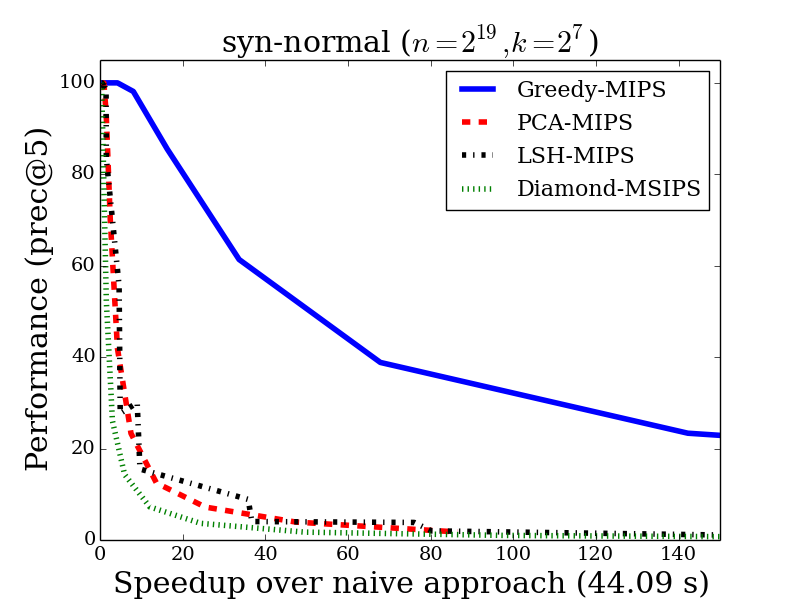}
    \end{subfigure}
    \begin{subfigure}[h]{0.45\linewidth}
      \includegraphics[width=1\linewidth]{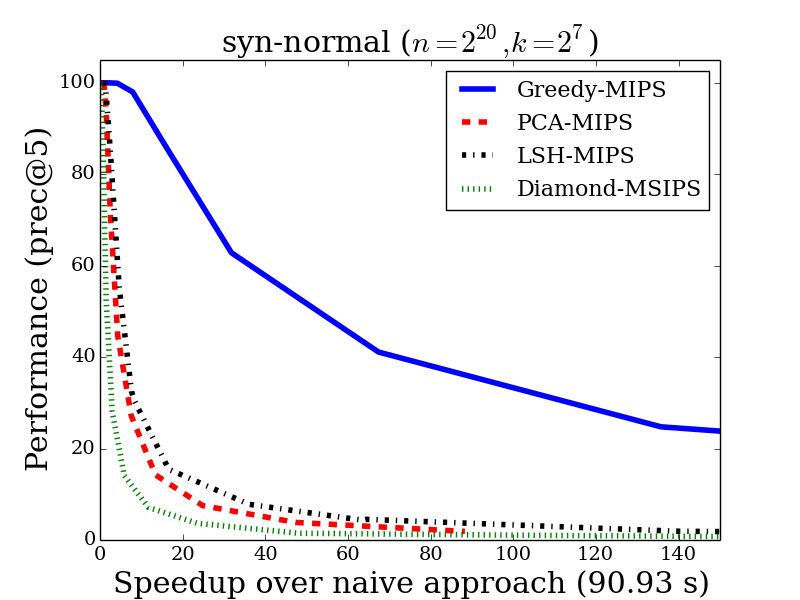}
    \end{subfigure}
  \end{tabular}
  \caption[MIPS Comparison on synthetic datasets with various $m$.]{MIPS Comparison on synthetic datasets with 
  $n \in 2^{\cbr{17,18,19,20}}$ and $k =128$. The datasets used to generate 
  results are created with each entry drawn from a normal 
  distribution. 
   } 
  \label{gmips:fig:various_m}
\end{figure}

\subsection{Experimental Settings and Evaluation Criteria}
All the experiments are performed on a Linux machine with 20 cores and 256 GB 
memory.  We ensure that only single core/thread is used for our experiments. 
To have a fair comparison, all the compared approaches are implemented in C++: 
\begin{itemize}
  \item \gmips: our proposed approach in Section~\ref{gmips:sec:greedy-mips}. 
    We compare the following variants in Section~\ref{gmips:sec:exp-result}:
    \begin{itemize}
      \item The improved \gmips in Algorithm~\ref{gmips:alg:improved-screen} 
        with the selection tree in Algorithm~\ref{gmips:alg:selection-tree}.   
      \item The improved \gmips in Algorithm~\ref{gmips:alg:improved-screen} 
        with a max-heap.  
      \item The original \gmips in Algorithm~\ref{gmips:alg:screen} with the 
        selection tree in Algorithm~\ref{gmips:alg:selection-tree}. 
    \end{itemize}
  \item NNS-based MIPS approaches: 
    \begin{itemize}
      \item \pmips: the approach proposed in~\cite{YB14a}, which is shown to be 
        the state-of-the-art among tree-based approaches~\cite{YB14a}. We 
        implement a complete PCA-Tree with the neighborhood boosting techniques 
        described in \cite{YB14a}. We vary the depth of PCA 
        tree to control the trade-off between the search quality and the search 
        efficiency. 
      \item \lmips: the approach proposed in~\cite{AS14a,BN15a}. We use the 
        nearest neighbor transform function proposed in ~\cite{YB14a,BN15a} and 
        use the random projection scheme as the LSH function as suggested in ~\cite{BN15a}. 
        We also implement the standard amplification procedure with an OR-construction of $b$ 
        hyper LSH hash functions. Each hyper LSH function is a result of an 
        AND-construction of $a$ random projections. We vary the values $(a,b)$ 
        to control the trade-off between the search quality and the search 
        efficiency. 
    \end{itemize}
  \item \dmips: the sampling scheme proposed in~\cite{GB15a} for the 
    maximum squared inner product search. As it shows better 
    performance than \lmips in \cite{GB15a} in terms of MIPS problems, we also 
    include \dmips into our comparison. F+Tree~\cite{HFY15a} is implemented as 
    the multinomial sampling procedure. 
 \item \nmips: the baseline approach which applies a linear search to identify 
   the exact top-$K$ candidates. 
\end{itemize}

{\bf Evaluation Criteria.}
For each dataset, the actual top-20 items for each query are regarded as 
the ground truth. We report the average performance on a randomly selected 
2,000 query vectors. To evaluate the search quality,  we use the precision on 
the top-$K$ prediction (prec$@K$), is obtained by selecting top-$K$ items from 
$\cC(\bw)$ returned by the candidate screening procedure of a compared MIPS 
approach. $K=5$ and $K=10$ are used in our experiments. To evaluate the search efficiency, we 
report the relative speedups over the \nmips approach as follows: 
\[
  \text{speedup} = \frac{\text{prediction time required by 
  \nmips}}{\text{prediction time by a compared approach}}. 
\]

{\bf Remarks on Budgeted MIPS versus Non-Budgeted MIPS.} As mentioned in 
Section~\ref{gmips:sec:bmips}, \pmips and \lmips cannot handle MIPS with a budget. 
Both the search computation cost and the search quality are fixed when the 
corresponding data structure is constructed. As a result, to understand the 
trade-off between search efficiency and search quality for these two approaches,
we can only try various values for its parameters (such as the depth for PCA 
tree and the amplification parameters $(a,b)$ for LSH). For each combination 
of parameters, we need to re-run the entire query-independent pre-processing 
procedure to construct a new data structure.

\begin{figure}[!th]
  \centering
  \begin{tabular}{@{}c@{}c@{}}
    \begin{subfigure}[h]{0.45\linewidth}
      \includegraphics[width=1\linewidth]{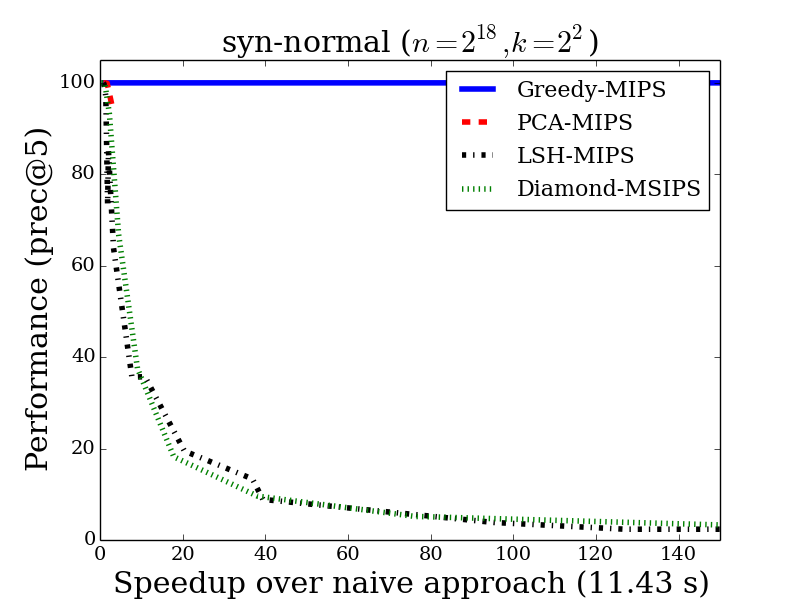}
    \end{subfigure}
    \begin{subfigure}[h]{0.45\linewidth}
      \includegraphics[width=1\linewidth]{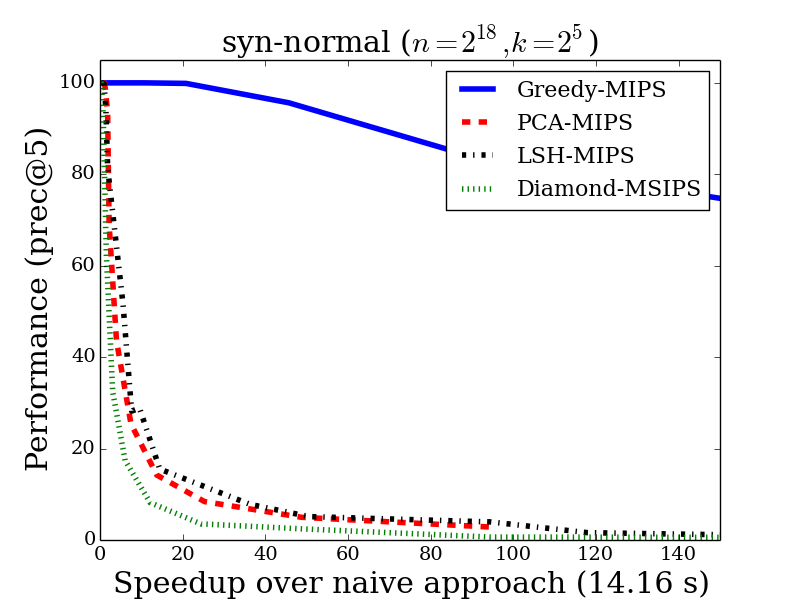}
    \end{subfigure}
    \\
    \begin{subfigure}[h]{0.45\linewidth}
      \includegraphics[width=1\linewidth]{figs/syn.m18.d7-comp-p_5.png}
    \end{subfigure}
    \begin{subfigure}[h]{0.45\linewidth}
      \includegraphics[width=1\linewidth]{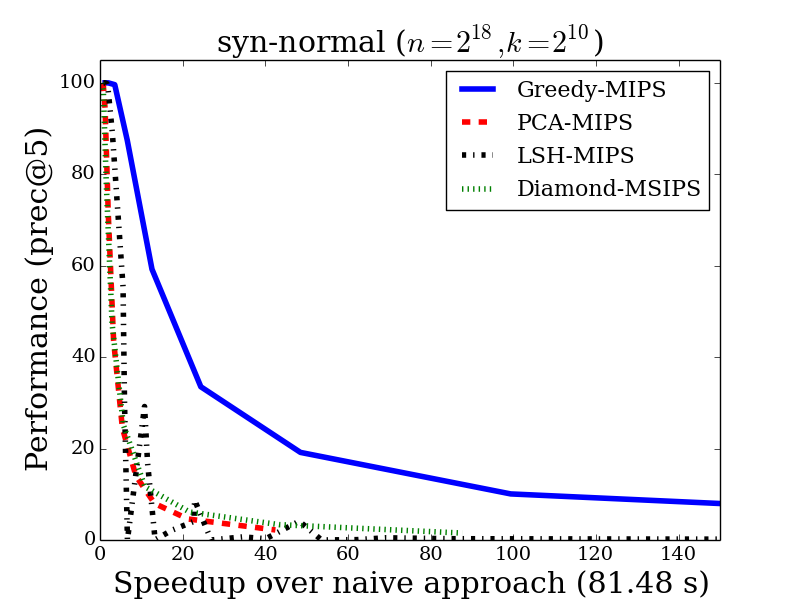}
    \end{subfigure} 
  \end{tabular}
  \caption[MIPS Comparison on synthetic datasets with various $k$.]{MIPS Comparison on synthetic datasets with 
    $n = 2^{18}$ and $k \in 2^{\cbr{2,5,7,10}}$. The datasets used to generate 
  results on are created with each entry drawn from a normal 
  distribution. 
  } 
  \label{gmips:fig:various_k}
\end{figure}

\subsection{Experimental Results}
\label{gmips:sec:exp-result}
{\bf Results on Variants of \gmips.} In Figure~\ref{gmips:fig:gmips-comp}, we 
shows the comparison between the three variants of \gmips on \netflix and 
\yahoo. We can see that the difference between the use of a selection-tree and 
a max-heap is small, while the different between the use of 
Algorithm~\ref{gmips:alg:screen} and the use of 
Algorithm~\ref{gmips:alg:improved-screen} is more significant. For the 
comparison to other MIPS approaches, we use \gmips to denote the results 
obtained from the version with the combination of 
Algorithm~\ref{gmips:alg:improved-screen} and  
Algorithm~\ref{gmips:alg:selection-tree}. 

{\bf Results on Real-World Data sets.} Comparison results for \netflix and 
\yahoo are shown in Figure~\ref{gmips:fig:realworld-comp}. The first, second, and
third columns present the results with $k=50$, $k=100$, and $k=200$, 
respectively. It is clearly observed that given a fixed speedup, \gmips yields 
predictions with much higher search quality. In particular, on the \yahoo data 
set with $k=200$, \gmips runs 200x  faster than \nmips and yields search 
results with $p_5=70\%$, while none of \pmips, \lmips, and \dmips can achieve a 
$p_5 > 10\%$ while maintaining the similar 200x speedups.   

{\bf Results on Synthetic Data Sets.} We also perform comparisons on synthetic 
datasets. The comparison with various $n\in2^{\cbr{17,18,19,20}}$ is shown in 
Figure~\ref{gmips:fig:various_m}, while the comparison with various 
$k \in 2^{\cbr{2,5,7,10}}$  is shown in Figure~\ref{gmips:fig:various_k}. We 
observe that the performance gap between \gmips over other approaches remains 
when $n$ increases, while the gap becomes smaller when $k$ increases. However, 
\gmips still outperforms other approaches significantly. 

%
%

\section{Conclusions}
\label{gmips:sec:conclusions}

In this paper, we study the computational issue in the  
prediction phase for many MF-based models: a maximum inner product search problem 
(MIPS) with a very large number of candidate embeddings. By carefully 
studying the problem structure of MIPS, we develop a novel \gmips 
algorithm, which can handle budgeted MIPS by design. While simple and 
intuitive, \gmips yields surprisingly superior performance compared to 
state-of-the-art approaches. As a specific example, on a candidate set 
containing half a million vectors of dimension 200, Greedy-MIPS runs 200x 
faster than the naive approach while yielding search results with the top-5
precision greater than 75\%.       

\small
\bibliographystyle{plain}
\bibliography{rf}
\end{document}